\pdfoutput=1
\documentclass[journal]{IEEEtran}
\usepackage{amsmath,units,amssymb,bm,amsthm,siunitx,textcomp}
\newcommand\norm[1]{\left\lVert#1\right\rVert}
\newcommand\curlybrace[1]{\left\{#1\right\}}
\newcommand\bracket[1]{\left(#1\right)}
\DeclareMathOperator{\tr}{tr}
\newtheorem{thm}{Theorem}[section]

\newtheorem{theorem}[thm]{Theorem}
\newtheorem{prop}{Proposition}
\usepackage{epsfig,subcaption}
\usepackage{algorithm, algorithmic} 
\usepackage[flushleft]{threeparttable}
\usepackage{booktabs,caption,adjustbox,realboxes}
\usepackage{diagbox}
\usepackage{multirow,booktabs}
\usepackage{slashbox}
\usepackage{url}
\usepackage{xcolor}

\begin{document}

\title{A Unified Model of Feature Extraction and Clustering for Spike Sorting}

\author{Libo~Huang,~\IEEEmembership{Student~Member,~IEEE,}
        Lu~Gan,~\IEEEmembership{Senior~Member,~IEEE,}
        and~Bingo~Wing-Kuen~Ling,~\IEEEmembership{Senior~Member,~IEEE}
\thanks{This paper was supported partly by the National Nature Science Foundation of China (no. U1701266, no. 61372173 and no. 61671163), the Team Project of the Education Ministry of the Guangdong Province (no. 2017KCXTD011), the Guangdong Higher Education Engineering Technology Research Center for Big Data on Manufacturing Knowledge Patent (no. 501130144), the Guangdong Province Intellectual Property Key Laboratory Project (no. 2018B030322016) and Hong Kong Innovation and Technology Commission, Enterprise Support Scheme (no. S/E/070/17), and in part by China Scholarship Council (CSC).}
\thanks{L. Huang is with the School of Information Engineering, Guangdong University of Technology, Guangzhou, 510006, China, and  the Department of Electrical and Computer Engineering, Brunel University London, London UB8 3PH, U.K. (e-mail: www.huanglibo@gmail.com).}
\thanks{L. Gan is with the Department of Electronic and Electrical Engineering, Brunel University London, UB8 3PH, U.K. (e-mail: lu.gan@brunel.ac.uk).}
\thanks{B. W. Ling is with School of Information Engineering, Guangdong University of Technology, Guangzhou, 510006, China (e-mail:yongquanling@gdut.edu.cn).}}


\markboth{}
{Huang \MakeLowercase{\textit{et al.}}: ***}

\maketitle

\begin{abstract}
Spike sorting plays an irreplaceable role in understanding brain codes. Traditional spike sorting technologies perform feature extraction and clustering separately after spikes are well detected. However, it may often cause many additional processes and further lead to low-accurate and/or unstable results especially when there are noises and/or overlapping spikes in datasets. To address these issues, in this paper, we proposed a unified optimisation model integrating feature extraction and clustering for spike sorting. Interestingly, instead of the widely used combination strategies, i.e., performing the principal component analysis (PCA) for spike feature extraction and K-means (KM) for clustering in sequence, we unified PCA and KM into one optimisation model, which reduces additional processes with fewer iteration times. Subsequently, by embedding the K-means++ strategy for initialising and a comparison updating rule in the solving process, the proposed model can well handle the noises and/or overlapping interference. Finally, taking the best of the clustering validity indices into the proposed model, we derive an automatic spike sorting method. Plenty of experimental results on both synthetic and real-world datasets confirm that our proposed method outperforms the related state-of-the-art approaches.

\end{abstract}

\begin{IEEEkeywords}
spike sorting, PCA, K-means, unified optimisation
\end{IEEEkeywords}

\section{Introduction} \label{sec:int}
In the neuroscience community, extracellular recording, especially on the single-electrode and the extended tetrodes, remains the only way to provide single action potential resolution to understand the brain codes of large animals or human objects~\cite{lewicki1998review,mokri2017sorting,harris2000accuracy,carlson2019continuing}. Different neurons generate action potentials with a unique extracellular waveform (i.e., ``spike")~\cite{lewicki1998review,rey2015past}. With these unique waveforms, assigning each spike to its tentative neuron becomes possible, and such procedure is called spike sorting~\cite{gibson2011spike,franke2015bayes}.
Outputs from spike sorting are beneficial in broad scopes of applications like neural prosthetic~\cite{linderman2007signal}, brain-machine interfaces~\cite{sanchez2007technology}, treating epilepsy~\cite{gibson2011spike}, etc. 
In real scenarios, since the recorded signals from extracellular implementations consist of mixture local field potentials, noises, spikes, etc., conventional spike sorting techniques firstly have to filter the signals to remove most unwanted noises and disturbances, leaving the filtered trace for the spike detection~\cite{rey2015past}. 
Afterwards, feature extraction and clustering on the detected spikes are needed, which are rather challenging compared with filtering and detection~\cite{mahallati2019cluster}. 

Over the past decades, numerous studies have been devoted to extracting discriminative spike features and clustering the featured spikes.
On one hand, the spike feature extraction methods can be fallen into two categories: global-features-based and local-features-based ones. Global-features-based methods include principal components analysis (PCA)~\cite{adamos2008performance}, Fourier transform~\cite{yang2013robust}, wavelet decomposition \cite{hulata2002method,chaure2018novel}, etc. 
Local-features-based methods are those called locality preserving projection (LPP)~\cite{nguyen2014spike}, Laplacian eigenmaps (LE)~\cite{chah2011automated}, and graph-Laplacian (GL)~\cite{ghanbari2010graph}, etc. 
On the other hand, clustering methods are flourishing, e.g., K-means (KM) \cite{chah2011automated,keshtkaran2017noise}, spectral clustering (SC) \cite{nguyen2014spike,huang2020spike}, Gaussian mixture model (GMM) \cite{souza2019spike,keshtkaran2017noise}. Simply combining spike feature extraction and clustering, i.e., separately performing them in sequence, has achieved certain degrees of success.

However, it is unpractical to exhaust all kinds of potential combinations, and what's worse, these combinations may yield unsatisfactory and unstable results for different datasets, especially when there are noises and/or overlapping spikes interference.
For instance, suppose we have two noise-disrupted datasets obtained from global-features-based and local-features-based methods as shown in~Fig.\ref{fig:fig1-a} and Fig.\ref{fig:fig1-d}, respectively. For the global-featured spikes in Fig.\ref{fig:fig1-a}, we yield a satisfactory result with KM method (Fig.\ref{fig:fig1-b}), but an unsatisfactory result with SC (Fig.\ref{fig:fig1-c}). Inversely, for the local-featured spikes in Fig.\ref{fig:fig1-d}, opposite conclusions are drawn, i.e., the result of SC is satisfactory (Fig.\ref{fig:fig1-f}) while that of KM method is not (Fig.\ref{fig:fig1-e})\footnote{Note that, for both KM and SC, different initial values will result in different outputs and we only show one of them randomly.}. 

Such problems are originated from at least the following three reasons:
1) the internal relationships between feature extraction and clustering methods are not considered when we simply combine them for spike sorting; 2) The features directly extracted from the detected spikes are sensitive to the noise interference from noises and/or overlapping spikes, without considering the feedback from the clustering stage; And 3) unsupervised clustering methods themselves are sensitive to initial values, and may fall into local optimum, etc.
\begin{figure}[t]
	\centering
	\begin{subfigure}{0.32\linewidth}
		\centering
		\includegraphics[width=1\linewidth]{./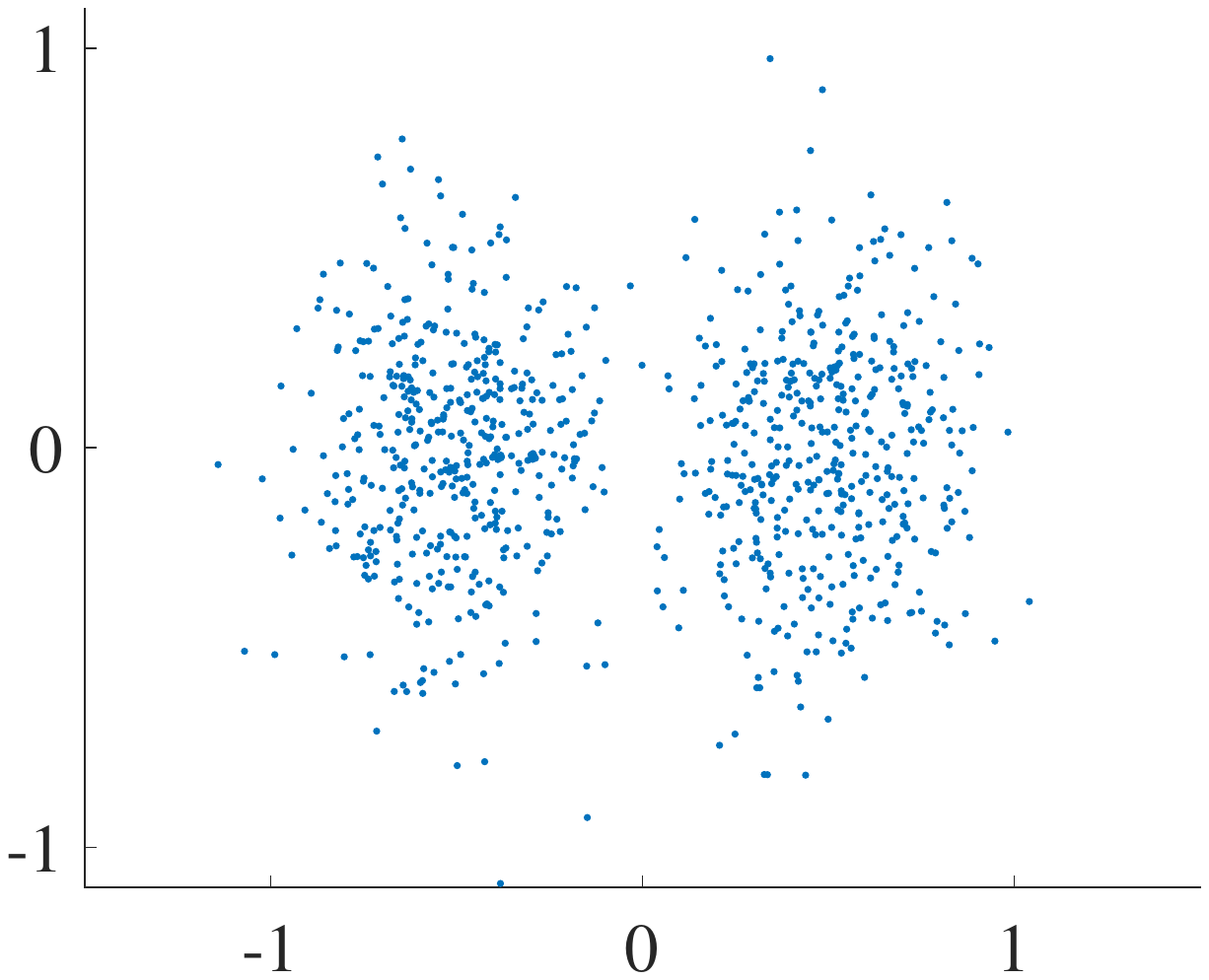}
		\caption{Featured spikes}
		\label{fig:fig1-a}
	\end{subfigure}
	\begin{subfigure}{0.32\linewidth}
		\centering
		\includegraphics[width=1\linewidth]{./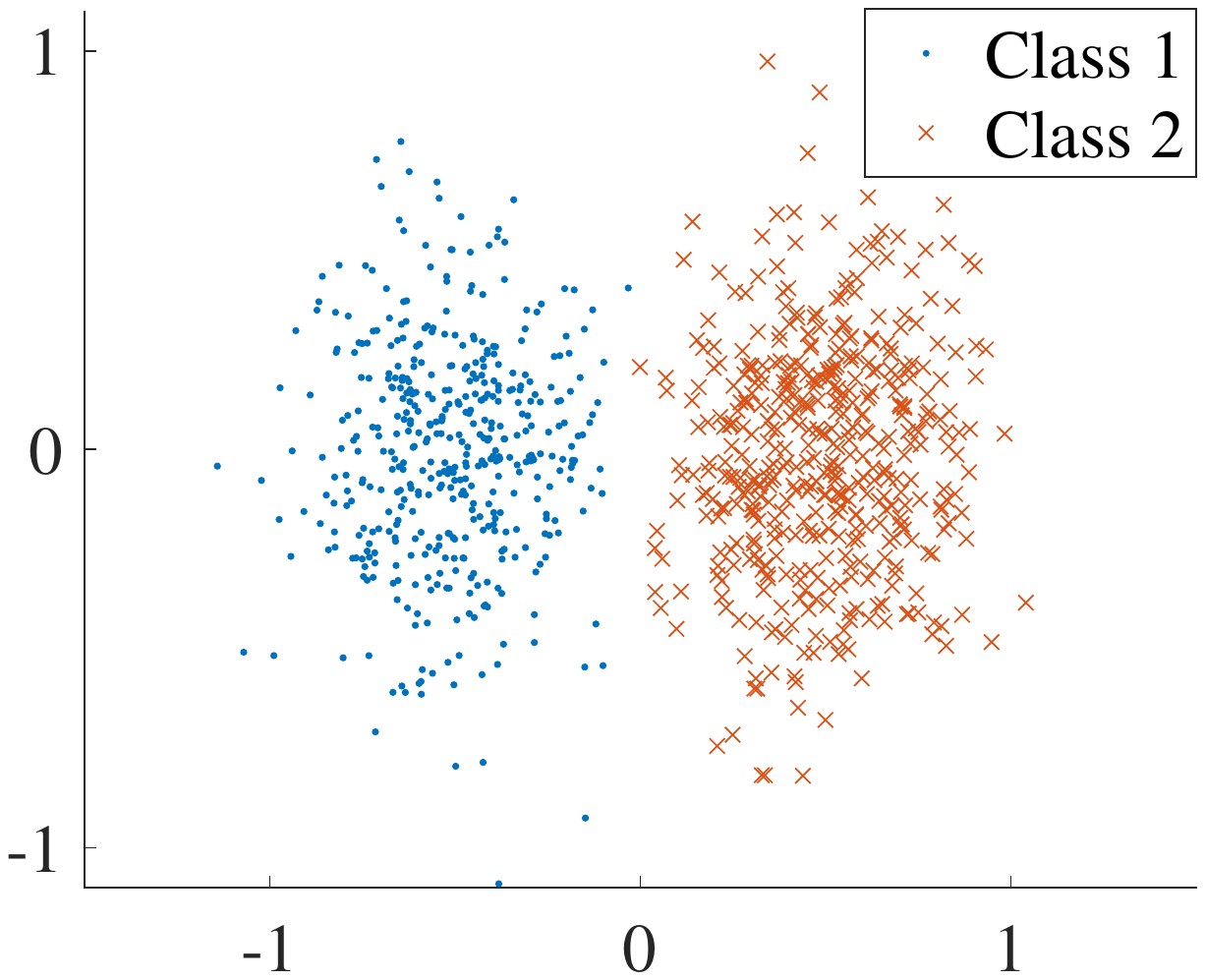}
		\caption{K-means}
		\label{fig:fig1-b}
	\end{subfigure}
	\begin{subfigure}{0.32\linewidth}
		\centering
		\includegraphics[width=1\linewidth]{./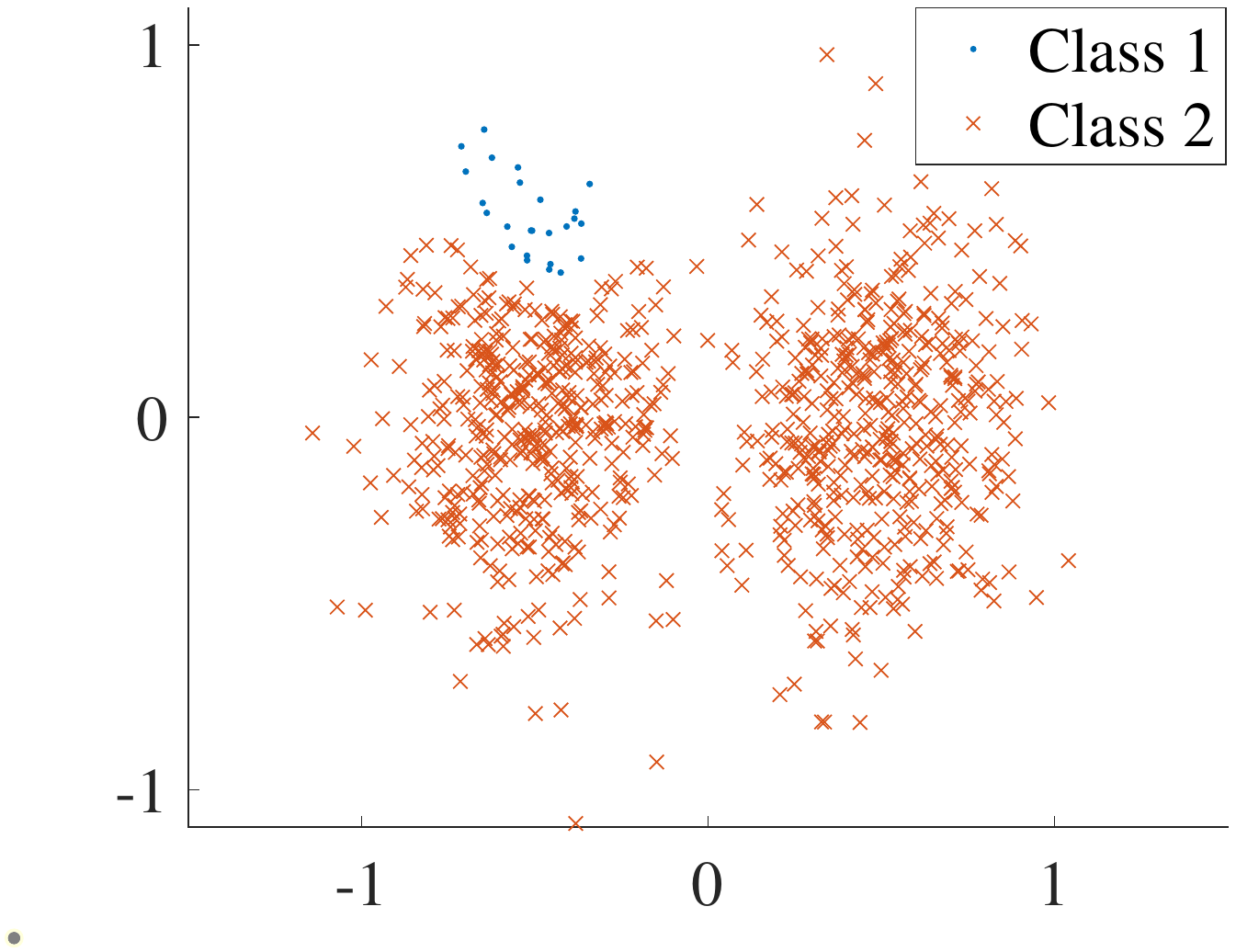}
		\caption{Spectral clustering}
		\label{fig:fig1-c}
	\end{subfigure}
    \newline
    \begin{subfigure}{0.32\linewidth}
		\centering
		\includegraphics[width=1\linewidth]{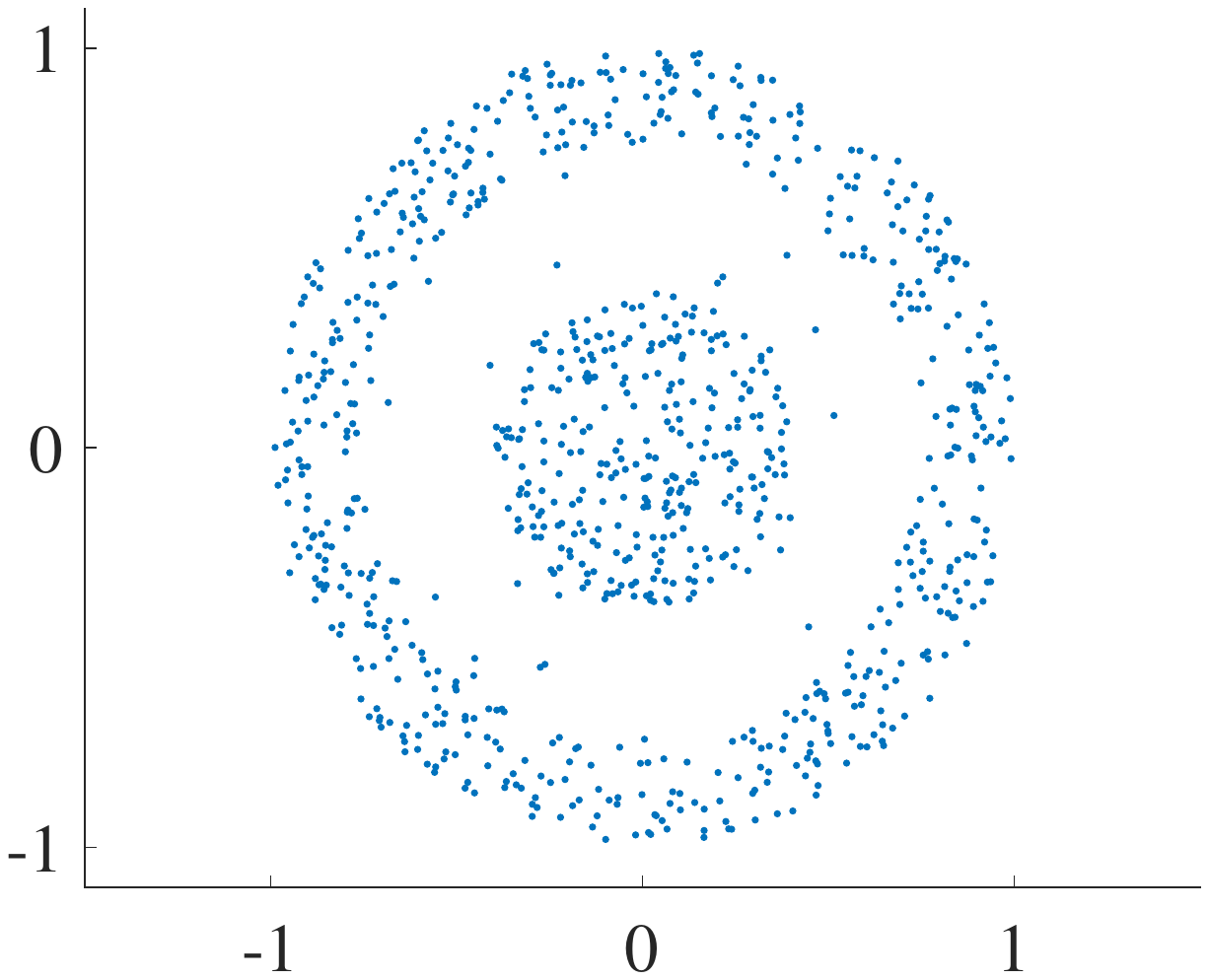}
		\caption{Featured spikes}
		\label{fig:fig1-d}
	\end{subfigure}
	\begin{subfigure}{0.32\linewidth}
		\centering
		\includegraphics[width=1\linewidth]{./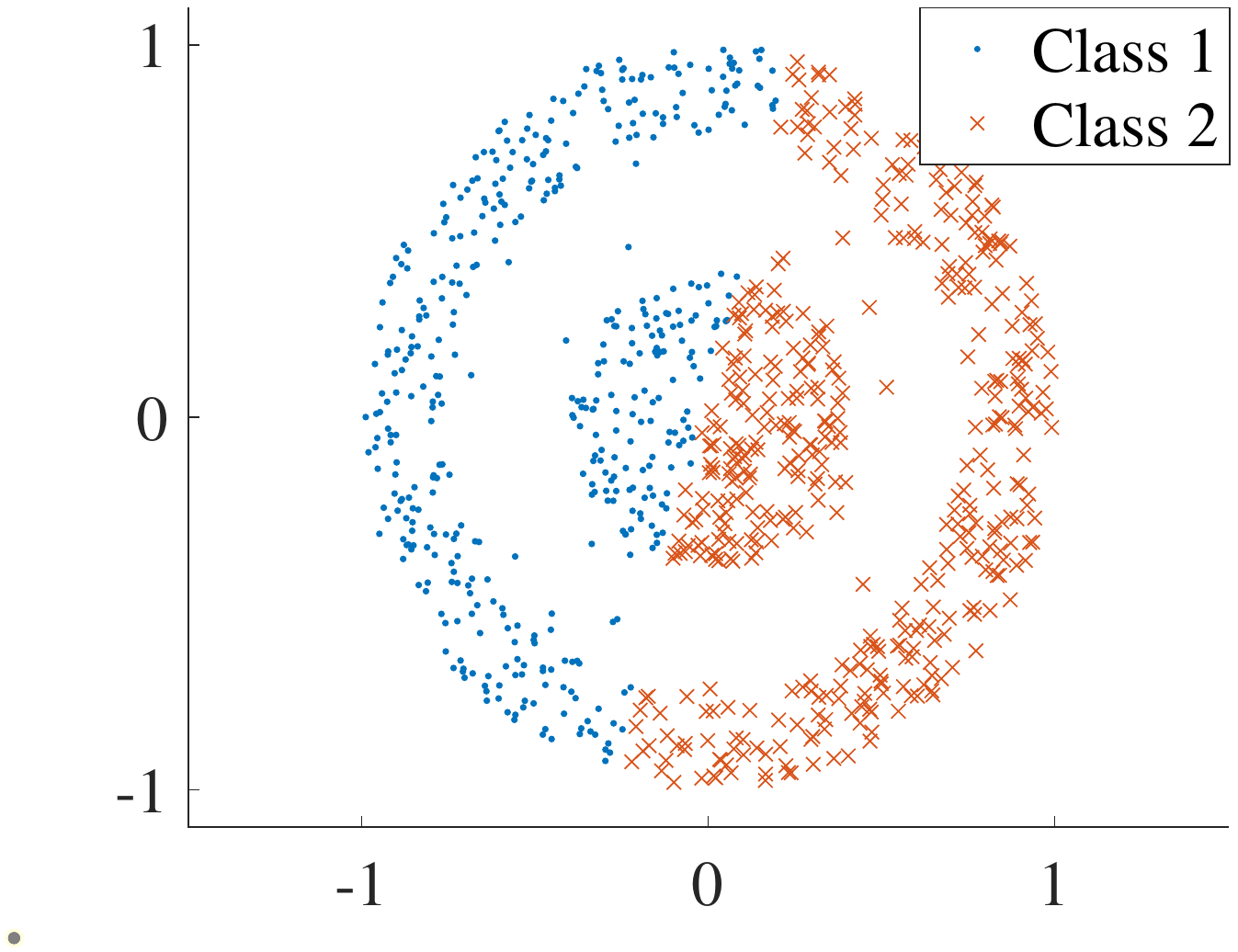}
		\caption{K-means}
		\label{fig:fig1-e}
	\end{subfigure}
	\begin{subfigure}{0.32\linewidth}
		\centering
		\includegraphics[width=1\linewidth]{./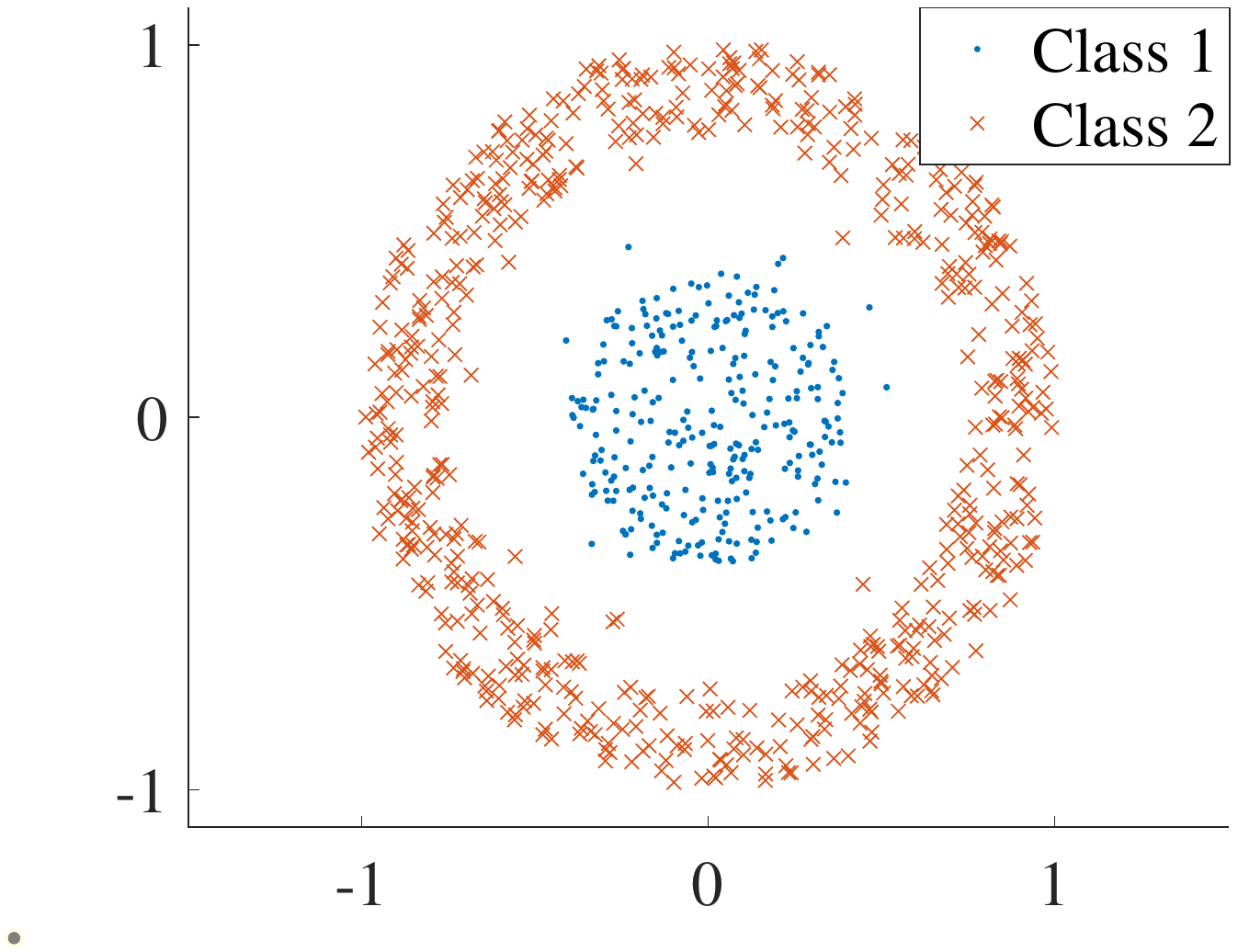}
		\caption{Spectral clustering}
		\label{fig:fig1-f}
	\end{subfigure}
	\caption{(a) and (d) are featured spikes mapped on two dimensions which are supposed to be obtained from global-based and local-based feature extraction methods, respectively. (b) and (e) show the results from K-means while (c) and (f) from spectral clustering. For the spikes distributed in (a), K-means performs better than spectral clustering, while for (d) the conclusion is opposite.}
	\label{fig:fig1}
\end{figure}

Recently, Keshtkaran \textit{et al.}~\cite{keshtkaran2014unsupervised,keshtkaran2017noise} introduced linear discriminant analysis guided by K-means (LDAKM)~\cite{ding2007adaptive}, which tried to joint LDA with KM into a coherent framework, to model the feature extraction and clustering problem in spike sorting. Subsequently, they take an additional GMM process after each KM iteration, proposing LDAGMM~\cite{keshtkaran2017noise} to detect the overlapping spikes. 
Unfortunately, LDAKM confuses the matrix ratio trace together with trace ratio criteria in the framework, resulting in the outputs vibration or even mis-convergence~\cite{jia2009trace,wang2019unsupervised}. LDAGMM forces more different algorithms to be spliced together compared with LDAKM, which may distract the jointed model. Along with this distraction, more iteration times and computation resources are required. Therefore, both methods still suffer from unstable results.

Inspired by these efforts~\cite{keshtkaran2014unsupervised,keshtkaran2017noise}, in this paper, we propose a unified optimisation model of feature extraction and clustering for spike sorting, intrinsically integrating PCA and KM, which enforces these two stages to pursue an identical criteria. In particular, by engaging alternating direction method~\cite{he2012alternating}, we alternatively update the extracted features and clustering results, which means the feedback from the clustering is considered in the feature extraction stage.
Further, embedding K-means++ initialising and comparison updating strategies in the cluster step, the unstable problem is alleviated significantly. Finally, we derive an automatic spike sorting framework after estimating the number of neurons through existing validity indices.
Our contributions are mainly three-fold,
\begin{itemize}
	\item We propose an automatic spike sorting framework which not only keeps the time efficiency but also enjoys the high accuracy property. Moreover, the proposed method guarantees that the results are stable whether noises and/or overlapping spikes exist or not.
	\item We theoretically give the solution to our proposed unified model, and also prove that it can be converted into the PCA or KM like process for each alternate updating. Besides, we provide its computational complexity analysis.
	\item We conduct various experiments on both synthetic and real-world datasets, which verify the monotonicity of our model, rather than vibration. The efficacy and stability of our proposed method are also confirmed. 
\end{itemize}

The rest of this paper is organised as follows. In section~\ref{sec:met}, our automatic spike sorting framework is given in details, particularly, the newly unified feature extraction and clustering model and its solution are formulated and explained together with corresponding theoretical analysis, etc. Section~\ref{sec:exp} presents and discusses our experimental results. Finally, we conclude this paper in Section~\ref{sec:dis}.

\section{Methods} \label{sec:met}
In this section, the proposed automatic spike sorting framework is demonstrated in details. It includes filtering, spike detection, and the proposed unified feature extraction and clustering model integrating PCA with KM.

Filtering and amplitude threshold detection in~\cite{chaure2018novel} and ~\cite{ekanadham2014unified} are engaged in our framework. Note that, the particular parameters in these two steps for different datasets are, by default, assigned according to~\cite{chaure2018novel,ekanadham2014unified}, and we will point out these default settings in Section~\ref{sec:exp} as well. Thereafter, each detected spike is stored in the matrix $X\in\mathbb{R}^{d\times n}$ in the form of a column. Here, $d$ and $n$ stand for the number of variables (dimensions) and the number of observations (spikes), respectively.
\subsection{Theoretically Represent and Optimise PCA and KM}\label{subsec:pca} 
For easy representation, we first theoretically represent and optimise PCA and KM.
\subsubsection{PCA}
Given a detected spike dataset $X$, the target of PCA is to construct an orthogonal linear projection matrix $W\in \mathbb{R}^{d\times m}$ so that the projected spikes lie in the $m$ linear spaces with greatest data variances. Here, $m$ is the reduced dimension.
The centralisation matrix is firstly formed by~\cite{wang2019unsupervised}, 
\begin{align} \label{equ:H}
H=I-\frac{1}{n}N\in \mathbb{R}^{n\times n},
\end{align}
where $I$ and $N$ are, respectively, the identity matrix and the all-ones matrix, in which every element is $1$. It can be checked that, $H$ is a symmetric and idempotent matrix~\cite{hou2014discriminative}, i.e., $H=H^T=HH^T$. Then, the spikes can be centred by rightly multiplying $H$, i.e., $XH$. PCA resorts to the total scatter matrix, $XH(XH)^T=XHX^T$, as,
\begin{equation} \label{equ:pca_pro}
\begin{split}
\max_W& \ \ W^TXHX^TW, \\
s.t.,& \ \  W^TW=I.
\end{split}
\end{equation}
Actually, after taking the trace operation $\tr(\cdot)$ on the matrix, $W^TXHX^TW$, the above problem can be solved by performing eigenvalue decomposition (EVD)~\cite{hou2014discriminative} on $XHX^T$. That is $XHX^T=W\Lambda {W}^{-1}$, where $\Lambda$ is the diagonal matrix whose diagonal entries are the corresponding eigenvalues of $XHX^T$. 
Finally, we can get the optimal value and the optimum solution, $W^*$, with the biggest $m$ eigenvalues in $\Lambda$ and the corresponding $m$ eigenvectors in $W$, respectively.

\subsubsection{KM}\label{subsec:kme}
In KM, a label indicator matrix, $G=\curlybrace{G_{ij}|i=1,...,n, j=1,...,c}\in\mathbb{R}^{n\times c}$, is constructed by the rule that $G_{ij}=1$ if the i-$th$ spike belongs to the j-$th$ class and other elements of $G$ are $0$, where $c$ is the number of classes (i.e., neurons). Meanwhile, the centres of $c$ classes are stored as columns in the matrix $M\in\mathbb{R}^{d\times c}$. Then the entire KM clustering process can be expressed as searching for one appropriate local optima solution by performing expectation maximisation (EM) algorithm on the flowing problem~\cite{de2006discriminative, xu2015pca}, 
\begin{equation} \label{equ:kmeans_pro}
\begin{split}
    \min_{G,M}& \ \ \norm{XH-MG^T}_F^2, \\
    s.t., \ \  G\bm{1}_c=&\bm{1}_n, \ \  G^T\bm{1}_n\succ\bm{1}_c, \ \ G_{ij}\in\curlybrace{0,1},
\end{split}
\end{equation}
where $\norm{X}_F=\left( \sum_{i=1}^d \sum_{j=1}^n |X_{ij}|^2 \right)^{\frac{1}{2}}$ is the Frobenius norm with respect to matrix $X$, $\bm{1_c}\in\mathbb{R}^{c\times 1}$ is an all-ones column vector whose elements are $1$, and $X\succ Y$ indicates $X_{ij}>Y_{ij}$ with element-wise inequality. 
\begin{theorem} \label{the:kmeans} 
Let $X\in \mathbb{R}^{d\times n}$ be the input detected spikes, $\tr(\cdot)$ be the trace operation, and $H\in\mathbb{R}^{n\times n}$ be centralisation matrix, i.e., $H=I-\frac{1}{n}N$. Then Problem (\ref{equ:kmeans_pro}) is equivalent to the following one,
\begin{equation*}
\begin{split}
    \min_G & \ \ \tr\left(XH(I-G(G^TG)^{-1}G^T)HX^T\right), \\
    s.t., & \ \ G\bm{1}_c=\bm{1}_n, \ \ G^T\bm{1}_n\succ\bm{1}_c, \ \ G_{ij}\in\curlybrace{0,1}.
\end{split} 
\end{equation*}
\end{theorem}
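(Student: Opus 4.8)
The plan is to eliminate the cluster-centre matrix $M$ by exploiting the fact that, although $G$ is combinatorially constrained, $M$ ranges freely over $\mathbb{R}^{d\times c}$. First I would fix an admissible $G$ and regard the objective $\norm{XH-MG^T}_F^2$ as an unconstrained quadratic in $M$. Expanding the Frobenius norm via $\norm{A}_F^2=\tr(AA^T)$ and differentiating with respect to $M$ (equivalently, completing the square) yields the stationarity condition $MG^TG=XHG$. The constraint $G^T\bm{1}_n\succ\bm{1}_c$ forces every cluster to be non-empty, so $G^TG$ is the diagonal matrix of positive cluster sizes and is therefore invertible; hence the unique minimiser is $M^\star=XHG(G^TG)^{-1}$.

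Next I would substitute $M^\star$ back into the objective. Writing $P=G(G^TG)^{-1}G^T$, the residual collapses to $\norm{XH(I-P)}_F^2$. The key algebraic fact is that $P$ is a symmetric idempotent matrix, i.e.\ an orthogonal projector onto the column space of $G$: symmetry is immediate because $G^TG$ is symmetric, and $P^2=P$ follows from $(G^TG)^{-1}(G^TG)=I$. Consequently $I-P$ is also symmetric and idempotent, so $(I-P)(I-P)^T=I-P$.

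Finally I would convert the Frobenius norm into the trace form. Using $\norm{A}_F^2=\tr(AA^T)$ together with the symmetry and idempotency of both $H$ (given by Eq.~(\ref{equ:H}), so that $H=H^T=H^2$) and of $I-P$, I obtain
\[
\norm{XH(I-P)}_F^2=\tr\!\left(XH(I-P)(I-P)^TH^TX^T\right)=\tr\!\left(XH(I-P)HX^T\right),
\]
which is exactly the stated objective once $P$ is written out as $G(G^TG)^{-1}G^T$. Since this identity holds for every admissible $G$, the joint minimisation over $(G,M)$ equals the minimisation of the reduced objective over $G$ alone, and the feasible set for $G$ is unchanged; this establishes the equivalence.

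The main obstacle I expect is bookkeeping rather than conceptual: carrying out the matrix differentiation carefully so that the normal equation reads $MG^TG=XHG$ (and not its transpose), and rigorously justifying the invertibility of $G^TG$ from the positivity constraint on the cluster sizes. Once the projector $P$ is identified and its symmetry and idempotency are verified, the collapse to the trace form is a routine application of $H=H^T=H^2$ and $P=P^T=P^2$.
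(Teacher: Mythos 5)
Your proposal is correct and takes essentially the same route as the paper's own proof: both eliminate $M$ by setting the derivative of $\norm{XH-MG^T}_F^2$ with respect to $M$ to zero, obtaining $M^\star=XHG(G^TG)^{-1}$, and then substitute back into the objective. If anything, you fill in steps the paper glosses over with ``combining Problem (3) and Eq.~(4), we get the result'' — namely the invertibility of $G^TG$ guaranteed by the non-emptiness constraint, and the projector identities $P=P^T=P^2$ and $H=H^T=H^2$ that collapse the residual to the stated trace form.
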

\begin{proof}{\ref{the:kmeans}}
Denote $f(G,M)=\norm{XH-MG^T}_F^2$. Since $\norm{X}^2_F=\tr(XX^T)=\tr(X^TX)$, we have,
\begin{align*}
    f(G,M)
    =\tr\left(XHX^T-2MG^THX^T+MG^TGM^T\right).
\end{align*}
According to $\frac{\partial f(G,M)}{\partial M}=0$ along with $\frac{d }{d X}\tr\bracket{XY^T}=\frac{d }{d X}\tr\bracket{Y^TX}=Y$ and $\frac{d}{dX}\tr\bracket{XYX^T}=X\bracket{Y+Y^T}$, 
we can further derive,
\begin{equation} \label{equ:m}
M=XHG\left(G^TG\right)^{-1}.
\end{equation}
Combining Problem \eqref{equ:kmeans_pro} and~Eq.\eqref{equ:m}, we get the result.
\end{proof}

Theorem \ref{the:kmeans}  indicates that KM can potentially be simplified from the core EM mechanism with iteratively updating two variables to the problem with only one variable needed to be solved.

\subsection{Unify PCA and KM for Spike Feature Extraction and Clustering}
In the conventional spike sorting procedure, KM is performed after doing PCA on the centralised data matrix, $XH$, for dimension reduction. That is to say, $W^TXH$ is actually fed into KM. 
In addition, PCA is a maximisation problem for pursing the single variable, $W$, and KM has the potential to be transformed into a minimisation problem for pursing the single-variable, $G$. By employing Theorem \ref{the:kmeans} and a trace operation on the ratio of matrix~\cite{fukunaga2013introduction}, we can integrate PCA and KM into one optimisation problem as follows\footnote{Generally speaking, there are, at least, four ways, including ratio trace, ratio determinant, trace linear combination and trace ratio, to integrate PCA and KM into one objection criteria as shown in~\cite{fukunaga2013introduction}. But in this paper, we only focus on the trace ratio in view of its convenience for optimising and presenting, etc.}, 
\begin{equation} \label{equ:model}
\begin{split}
    \max_{G,W}& \ \ \tr\curlybrace{\frac{W^TXHX^TW}{W^TXH(I-G(G^TG)^{-1}G^T)HX^TW}},\\
    s.t.,& \ \ G\bm{1}_c=\bm{1}_n, \ \ G^T\bm{1}_n\succ\bm{1}_c, \ \ G_{ij}\in\curlybrace{0,1}, \\
    & \ \ W^TW=I.
\end{split}
\end{equation}

We engage the alternating direction method~\cite{he2012alternating} with two-updating steps to solve the above problem, which guarantees to converge.
\subsubsection{Update G} \label{subsubsct:updateG}
When updating $G$, we ignore the constraint associated with $W$ and convert the maximisation problem to a minimisation one. Then Problem \eqref{equ:model} is converted to,
\begin{equation}\begin{split}
    \min_{G}& \ \ \tr\curlybrace{\frac{W^TXH(I-G(G^TG)^{-1}G^T)HX^TW}{W^TXHX^TW}},\\
    s.t.,& \ \ G\bm{1}_c=\bm{1}_n, \ \ G^T\bm{1}_n\succ\bm{1}_c, \ \ G_{ij}\in\curlybrace{0,1}.
\end{split}\end{equation}
According to Theorem \ref{the:kmeans}, solving the above problem equals to performing the following KM-like procedure,
\begin{equation} \label{equ:kmeans_w}
\begin{split}
    \min_{G}& \ \ \norm{\left(W^TXHX^TW\right)^{-\frac{1}{2}}W^TXH-MG^T}_F^2, \\
    s.t.,& \ \  G\bm{1}_c=\bm{1}_n, \ \ G^T\bm{1}_n\succ\bm{1}_c, \ \ G_{ij}\in\curlybrace{0,1},
\end{split}
\end{equation}
where $M=\left(W^TXHX^TW\right)^{-\frac{1}{2}}W^TXHG\left(G^TG\right)^{-1}$. To get a better optimum, we use two strategies as follows,
\begin{itemize}
\item Different from randomly initialising the class centres, we make use of the well-known K-means++~\cite{arthur2006k} initialisation strategy.
\item To ameliorate the local optimum problem, we employ a comparison updating rule~\cite{hou2014discriminative} detailed as follows.
\end{itemize}
\textit{Updating Rule}: 
We begin by defining that, 
\begin{equation*}
g(G,M)=\norm{\left(W^TXHX^TW\right)^{-\frac{1}{2}}W^TXH-MG^T}_F^2.
\end{equation*}
Suppose that we obtain the solutions to Problem~\eqref{equ:model}, $W_t^*$ and $G_t^*$ along with the corresponding $M_t^*$, after the $t$-th iteration. Before updating $G$ at the $(t+1)$-th iteration, we firstly initialise $\curlybrace{M_{t+1}^1, M_{t+1}^2, ..., M_{t+1}^a}$ by using K-means++ strategy, and calculate the corresponding $\curlybrace{G_{t+1}^1, G_{t+1}^2, ..., G_{t+1}^a}$. Here, $a$ denotes the number of initialisations, which is fixed to be $10$ in our model. Then we update $G$ with the following criteria,
\begin{equation} \label{equ:updateG}
G_{t+1}^*=
\begin{cases}
G_{t+1}^i,& \ {\rm if} \ g(G_{t+1}^i,M_{t+1}^i)<g(G_{t}^*,M_{t}^*); \\
G_{t+1}',& \ \ {\rm otherwise},
\end{cases}
\end{equation}
where $i=1,2,...,a$, and $G_{t+1}'$ is defined as the solution to following problem,
\begin{equation} \label{equ:kmeans_}
\begin{split}
\min_{G}& \ \ \norm{\left({W_t^*}^TXHX^T{W_t^*}\right)^{-\frac{1}{2}}{W_t^*}^TXH-M_t^*G^T}_F^2, \\
s.t.,& \ \  G\bm{1}_c=\bm{1}_n, \ \ G^T\bm{1}_n\succ\bm{1}_c, \ \ G_{ij}\in\curlybrace{0,1}.
\end{split}
\end{equation}

\subsubsection{Update W} When updating $W$, viewing $G$ as a constant, Problem \eqref{equ:model} can be recast as,
\begin{equation}\begin{split} \label{equ:updateW}
    \max_{W} &\ \ \tr\curlybrace{\frac{W^TXHX^TW}{W^TXH(I-G(G^TG)^{-1}G^T)HX^TW}},\\
    s.t.,& \ \ W^TW=I.
\end{split}\end{equation}
The above problem is a convex optimisation, which has a closed-form solution. It can be solved by the following proposition.

\begin{prop}  \label{prop:updateW}
The optimal solution to (\ref{equ:updateW}) can be derived by picking up $m$ generalised eigenvectors corresponding to the $m$ largest generalised eigenvalues of the decomposition as follows,
\begin{equation} \label{equ:pro}
    (XHX^T)W=\left[XH(I-G(G^TG)^{-1}G^T)HX^T\right]W\Lambda
\end{equation}
\end{prop}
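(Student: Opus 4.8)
The plan is to recognise \eqref{equ:updateW} as the classical ratio-trace (generalised Rayleigh-quotient trace) maximisation and to solve it through simultaneous diagonalisation of the underlying matrix pencil. First I would fix notation: write $A=XHX^T$ for the numerator scatter and $B=XH\bracket{I-G(G^TG)^{-1}G^T}HX^T$ for the denominator scatter, and read the matrix fraction inside the trace as $\tr\curlybrace{\bracket{W^TBW}^{-1}\bracket{W^TAW}}$. Both matrices are symmetric and positive semidefinite; indeed $P_G=G(G^TG)^{-1}G^T$ is the orthogonal projector onto the column space of $G$, so $I-P_G$ is idempotent and $B=\bracket{XH(I-P_G)}\bracket{XH(I-P_G)}^T$. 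I would assume (or restrict to the range on which) $B\succ 0$, so that the generalised eigenproblem $Aw=\lambda Bw$ is well posed and admits a complete $B$-orthonormal eigenbasis.

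The key structural fact I would exploit is an invariance: for any invertible $R$, replacing $W$ by $WR$ conjugates $\bracket{W^TBW}^{-1}\bracket{W^TAW}$ by $R$, hence leaves its trace unchanged. Thus the objective depends only on the subspace $\mathrm{range}(W)$ and not on the particular basis, so the stated normalisation $W^TW=I$ may be traded for the more convenient $B$-orthonormal gauge $W^TBW=I$ without changing the optimal value. Under that gauge the problem collapses to $\max_W\tr\bracket{W^TAW}$ subject to $W^TBW=I$.

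Next I would whiten the constraint. Setting $\tilde W=B^{1/2}W$ and $C=B^{-1/2}AB^{-1/2}$ converts the problem to $\max_{\tilde W}\tr\bracket{\tilde W^TC\tilde W}$ subject to $\tilde W^T\tilde W=I$. Since $C$ is symmetric, Ky Fan's maximum principle gives that the optimal value equals the sum of the $m$ largest eigenvalues of $C$ and is attained exactly when the columns of $\tilde W$ span the corresponding top eigenspace. Because $Cv=\mu v$ is equivalent to $A\bracket{B^{-1/2}v}=\mu B\bracket{B^{-1/2}v}$, the eigenpairs of $C$ are in bijection with the generalised eigenpairs of $(A,B)$; mapping back through $W=B^{-1/2}\tilde W$ shows that the maximiser spans the $m$ generalised eigenvectors with the $m$ largest generalised eigenvalues, i.e. the optimal $W$ satisfies $AW=BW\Lambda$, which is precisely \eqref{equ:pro}.

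I expect the main obstacle to be the bookkeeping around the two constraint gauges rather than the eigenvalue estimate itself: one must argue carefully that the Euclidean-orthonormal constraint $W^TW=I$ written in \eqref{equ:updateW} and the $B$-orthonormal basis delivered by \eqref{equ:pro} describe the \emph{same} optimal subspace, so that re-orthonormalising a set of generalised eigenvectors produces a feasible $W$ with identical objective value; the invariance argument above is exactly what closes this gap. A secondary point to handle is the positive-definiteness of $B$ (equivalently, that $XH(I-P_G)$ has full row rank), which legitimises $B^{1/2}$ and the whitening step; in the rank-deficient case one restricts $A$ and $B$ to $\mathrm{range}(B)$ and reruns the same argument there. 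An alternative route through a matrix Lagrange multiplier for $W^TW=I$ reaches \eqref{equ:pro} directly, but the whitening-plus-Ky-Fan argument is cleaner because it simultaneously certifies that the eigenvector solution is globally optimal.
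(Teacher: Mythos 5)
Your proof is correct, but it takes a genuinely different route from the paper's. The paper's proof (Appendix~\ref{app:A}) is a first-order stationarity argument: writing $S_1=XHX^T$ and $S_2=XH(I-G(G^TG)^{-1}G^T)HX^T$ (your $A$ and $B$), it uses the matrix-calculus identities of Appendix~\ref{app:B} to set the gradient of the trace ratio to zero, obtaining $S_2^{-1}S_1W=W(W^TS_2W)^{-1}(W^TS_1W)$, then invokes a simultaneous congruence diagonalisation of $W^TS_1W$ and $W^TS_2W$ to rewrite this as $S_2^{-1}S_1WA=WAM_d$ with the objective equal to $\tr\curlybrace{M_d}$, from which the top-$m$ generalised eigenpairs are selected. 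You instead argue globally: the invariance under $W\mapsto WR$ shows the objective depends only on $\mathrm{range}(W)$, which licenses trading $W^TW=I$ for the gauge $W^TBW=I$, whitening by $B^{-1/2}$, and applying Ky Fan's maximum principle to $C=B^{-1/2}AB^{-1/2}$. Your route buys a genuine certificate of \emph{global} optimality and makes explicit two points the paper glosses over: the paper differentiates while silently dropping the constraint $W^TW=I$ (your invariance argument is exactly the missing justification), and it uses $S_2^{-1}$ without noting the invertibility assumption that you handle by restricting to $\mathrm{range}(B)$. The paper's route buys elementarity—only matrix derivatives and a congruence transform, no matrix square roots or Ky Fan—and directly produces the stationarity equation \eqref{equ:app4} it records for reuse. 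One caveat on your closing aside: a Lagrange multiplier for $W^TW=I$ gives $\nabla f(W)=W\Lambda'$, whose solutions characterise optimal subspaces rather than individual generalised eigenvectors, so that shortcut is less direct than you suggest; since it is not part of your actual argument, it does not affect correctness.
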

The proofs of this proposition are provided in Appendix~\ref{app:A}.

Finally, the whole solving algorithm of our proposed model is summarised in Alg.\ref{alg:PCAKM}.
\begin{algorithm}[tb]
	\caption{The algorithm for solving Problem \eqref{equ:model}}
	\label{alg:PCAKM}
	\begin{algorithmic}[1]
		\REQUIRE
		The detected spikes $X\in\mathbb{R}^{d\times n}$, the number of neurons $c$.
		\ENSURE
		Transformation matrix $W^*\in\mathbb{R}^{d\times m}$, the allocation results of spikes $G^*\in\mathbb{R}^{n\times c}$.
		\STATE Initialise $m=c-1$, $t=1$; Define $H\in\mathbb{R}^{n\times n}$, $W^{(0)}\in\mathbb{R}^{d\times m}$, and $G^{(0)}\in\mathbb{R}^{n\times c}$ according to Eqs.\eqref{equ:H}, \eqref{equ:pca_pro}, and \eqref{equ:kmeans_pro}, respectively. 
		\STATE\label{alg1:loop} Update $W_t$ according to Proposition \ref{prop:updateW}.
		\STATE Update $G_t$ according to Eq.\eqref{equ:updateG}.
		\STATE Quit the loop if $G_t=G_{t-1}$, otherwise, $t=t+1$ and go back to step \ref{alg1:loop}.
		\STATE Return $G^*=G_t$ and $W^*=W_t $.
	\end{algorithmic}
\end{algorithm}

\subsection{Other Descriptions of the Unified Model} 
In this subsection, the computational complexity of the proposed model is analysed, and schemes of automatically estimating of the number of neurons and determining the dimension of the reduced feature are presented.

\subsubsection{Analysing the Computational Complexity} 
The computational complexity of our proposed model mainly consists of three parts: the calculation of $W^TXHXW$ when $W$ is known; Using KM to solve submodel~\eqref{equ:kmeans_}; Using generalised EVD to solve submodel~\eqref{equ:pro}. In fact, by reusing the constructed $XHX\in\mathbb{R}^{d\times d}$, the time complexity of constructing $W^TXHXW$ is $O(d^2m+m^2d)$. The computational complexity of KM is $O(nmc)$~\cite{xu2015pca}, while that of the generalised EVD is $O(d^2n)$~\cite{hou2014discriminative}. Here, $d$, $n$, $m$, and $c$ denote the original dimensions of the spike, the total number of spikes, the reduced dimensions of the spike, and the number of spike's categories, respectively. Therefore, the total complexity is $O(b(d^2m+m^2d+nmc+d^2n))$. Generally, it can be further simplified to $O(bd^2n)$, which is linear to the number of spikes, since $n\gg d>m,c$ and $b\approx 10$ hold in the single electrode or tetrode extracellular recorded signal.

\subsubsection{Automatically Estimating the Number of Neurons}
Although many spike sorting methods determine the number of neurons by manually specifying or graphically merging and decomposing clusters~\cite{ghanbari2010graph,ekanadham2014unified,rossant2016spike}, integrating existing estimation methods into the model can reduce a lot of unnecessary human intervention~\cite{keshtkaran2014unsupervised}. To this end, we integrate the estimation of the number of neurons after the model initialisation and before the main iterative optimisation. Within a range of the number of neurons in initial settings, the spike sorting algorithm is pre-run to obtain a series of corresponding index values, and then one reasonable number of neurons is picked up based on the best index value~\cite{yang2013robust,nguyen2015automatic}. Two clustering validity indices, Calinski Harabasz~\cite{zhang2018review} and gap~\cite{nguyen2014spike}, are employed in this paper.

\subsubsection{Automatically Determining the Number of Features}
To efficiently reduce the time consumption and further free human intervention, in our method, automatically determining the dimensions of spikes is adopted. Specifically, $m_0$, as the initially reduced dimensions, only affects the estimation of the number of neurons. After estimating the number of neurons $\tilde{c}$, $m=\tilde{c}-1$ is recommended, where $m$ denotes the dimension of the spike in iteration process. This strategy not only reduces computational complexity in each iteration but avoids the singularity problem~\cite{jia2009trace} in EVD.

Finally, our proposed automatic spike sorting procedure is summarised in Alg.\ref{alg:PCAKM_auto}.
\begin{algorithm}[tb]
	\caption{Proposed automatic spike sorting procedure}
	\label{alg:PCAKM_auto}
	\begin{algorithmic}[1]
		\REQUIRE
		The recorded spike signal; The initial reduced dimension $m_0$.
		\ENSURE
		 Transformation matrix $W^*\in\mathbb{R}^{d\times m}$, the allocation results of spikes $G^*\in\mathbb{R}^{n\times \tilde{c}}$, and the estimated number of neurons $\tilde{c}$.
		\STATE Filter and detect the recorded signal to get the spikes signal matrix $X\in\mathbb{R}^{d\times n}$.
		\STATE Initialise $t=1$; Define $H\in\mathbb{R}^{n\times n}$ and $W^{(0)}\in\mathbb{R}^{d\times m_0}$  by solving \eqref{equ:H} and \eqref{equ:pca_pro}, respectively.
		\STATE Estimate the best number of neurons $\tilde{c}$ and $G^{(0)}\in\mathbb{R}^{n\times \tilde{c}}$ simultaneously by solving \eqref{equ:kmeans_pro} with $W_0$ and Calinski Harabasz~\cite{zhang2018review} or gap~\cite{nguyen2014spike}.
		\STATE\label{alg:loop} Update $W^{(t)}$ according to Proposition \ref{prop:updateW}.
		\STATE Update $G^{{(t)}}$ according to \eqref{equ:updateG}.
		\STATE Quit the loop if $G^{(t)}=G^{(t-1)}$, otherwise, $t=t+1$ and go back to step \ref{alg:loop}.
		\STATE Return $G^*=G^{(t)}$, $W^*=W^{(t)}$ and $\tilde{c}$.
	\end{algorithmic}
\end{algorithm}

\section{Experiments} \label{sec:exp}
The proposed automatic spike sorting method is evaluated on both several well-known simulated datasets and the real tetrode-recorded dataset. Further, we compared it with several related state-of-the-art spike sorting methods on different evaluation criteria, including neurons estimation, accuracy, time consumption, stability, etc.
All codes of our experiments are implemented in MATLAB R2020a, which is run on a personal computer with Intel Core i7-8750H 2.20 GHz CPU, 32G RAM).

\subsection{Experimental Datasets and Evaluation Criteria} 
Four challenging synthetic neural datasets in Wave\_clus~\cite{chaure2018novel,quiroga2004unsupervised}, which own the ground truths, are used in our simulations. 
They are Easy1\_noise*, Easy2\_noise*, Difficult1\_noise*, and Difficult2\_noise*, in which the asterisk stands for the embedded noise standard deviation (i.e., noise level) with respect to the main waveforms. Here, the waveforms were issued by three neurons in neocortex and basal ganglia~\cite{quiroga2004unsupervised}, and the noise was mimicked by randomly selecting the real recorded signal with different amplitudes from different neurons~\cite{chaure2018novel}. Note that, `Easy' reflects that the used waveforms are easy to distinct (i.e., they are different from each other) and vice versa for the `Difficult'. Besides, the noise level of `Easy1' is ranging from 0.05 to 0.4 with the step length 0.05 while those of the rest three datasets are in [0.05, 0.2]. 

We also engaged the publicly available vivo dataset, HC1~\cite{harris2000accuracy} in our real-world experiment. 
HC1 was recorded from the anesthetized rat hippocampus using an extracellular tetrode, with which one extra intracellular electrode was implanted to a single neuron to obtain its actual spike times (refer to ground truth) for evaluation~\cite{wehr1999simultaneous,ekanadham2014unified}.

To take the best detected spike in simulation datasets and avoid errors from detection, we pick up spikes based on the ground truth for Wave\_clus. Then, with the detected spike matrix $X$, we evaluated the performance of determining the number of neurons, clustering accuracy, time consumption, etc. Especially, to reflect the stability, we conducted $20$ independent trials, which resulted in corresponding means and their standard deviations.
As for the real-world dataset,  HC1, the F1 related criteria, false negative rate (FNR, or called as Miss rate) and false positive rate (FPR), are used. Here, the false negative represents that the predicted cluster is negative but the real class is true, while the false positive means the predicted cluster is positive but the real class is false. FNR and FPR are reasonable enough to reflect the clustering results, and lower values for both of them mean better performances~\cite{ekanadham2014unified,keshtkaran2017noise}.
\begin{figure}[!t]
	\centering
	\includegraphics[width=0.45\textwidth]{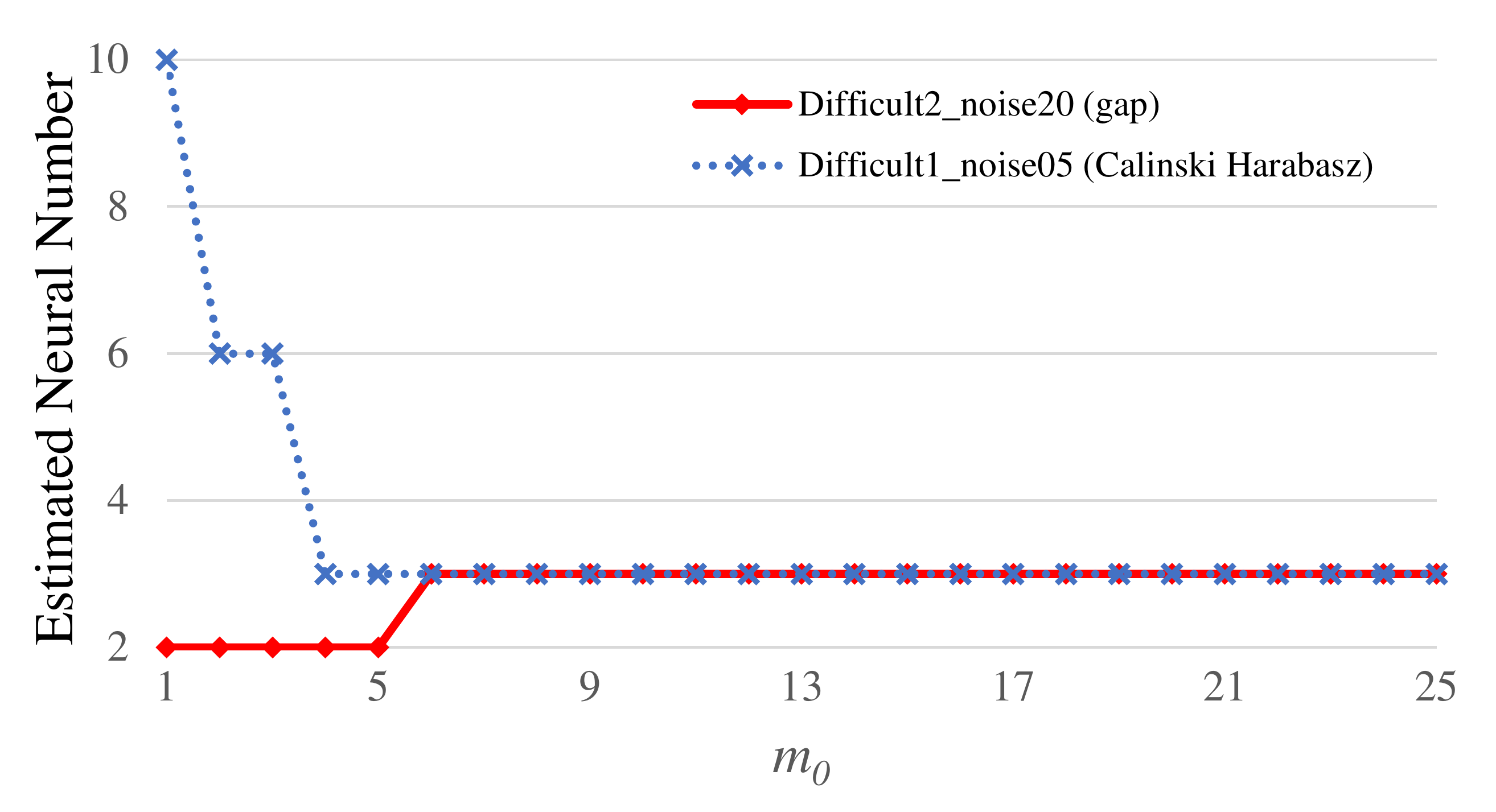}
	\caption{The relationship between the initial number of features $m_0$, and the estimated number of neurons $\tilde{c}$ on two datasets.}
	\label{fig:est_number}
\end{figure}

\subsection{Experiments on Synthetic Dataset - ``Wave\_clus"}
Same as the original works~\cite{chaure2018novel,quiroga2004unsupervised}, ``Wave\_clus" was filtered with passband located at $300-6,000$ Hz firstly. But different from the default configuration using amplitude thresholds to detect spikes, we directly extracted those spikes by using a $64$ sample-length window from the known spike time in ground truths. This facilitates to get consistent spike datasets to perform fair comparison with different methods.

\subsubsection{Determination of the Number of Neurons} \label{subsubsec:number}
\begin{table}[!t]
	\centering
	\caption{The number of estimated neurons using various automatic spike sorting methods on Wave\_clus datasets with (without) overlapping spikes.}
	\label{tab:num_neural}
	\begin{tabular}{@{}ll|lllllll@{}} 
		\toprule
		\multirow{2}{*}{DS} & \multirow{2}{*}{NL} & \multicolumn{7}{l}{With Overlapping Spikes (Without)} 
		\\ \cmidrule(l){3-9} 
		& & A & B & C & D & E
		& P1 & P2 
		\\ \midrule
		Easy1 
		& \begin{tabular}[t]{@{}l@{}}05\\ 10\\ 15\\ 20\\ 25\\ 30\\ 35\\ 40\end{tabular}
		& \begin{tabular}[t]{@{}l@{}}2(3)\\ 3(3)\\ 8(2)\\ 3(2)\\ 2(7)\\ 3(2)\\ 3(3)\\ 4(3)\end{tabular} 
		& \begin{tabular}[t]{@{}l@{}}2(10)\\ 6(9)\\ 2(4)\\ 3(7)\\ 2(2)\\ 8(3)\\ 4(2)\\ 2(2)\end{tabular} 
		& \begin{tabular}[t]{@{}l@{}}2(4)\\ 2(4)\\ 2(3)\\ 2(5)\\ 2(3)\\ 2(5)\\ 2(5)\\ 2(4)\end{tabular} 
		& \begin{tabular}[t]{@{}l@{}}2(3)\\ 2(5)\\ 2(4)\\ 2(5)\\ 2(4)\\ 2(5)\\ 4(5)\\ 2(4)\end{tabular} 
		& \begin{tabular}[t]{@{}l@{}} 3\\ 3\\ 3\\ 3\\ 6\\ 3\\ 4\\ 3\end{tabular} 
		& \begin{tabular}[t]{@{}l@{}}3(3)\\ 3(3)\\ 3(3)\\ 3(3)\\ 3(3)\\ 3(3)\\ 3(3)\\ 3(3)\end{tabular} 
		& \begin{tabular}[t]{@{}l@{}}3(5)\\ 3(3)\\ 3(3)\\ 3(3)\\ 3(3)\\ 3(3)\\ 3(3)\\ 3(3)\end{tabular} 
		\\ \midrule
		Easy2
		& \begin{tabular}[t]{@{}l@{}}05\\ 10\\ 15\\ 20\end{tabular}
		& \begin{tabular}[t]{@{}l@{}}7(4)\\ 3(3)\\ 8(2)\\ 3(2)\end{tabular} 
		& \begin{tabular}[t]{@{}l@{}}3(7)\\ 7(10)\\ 6(6)\\ 9(10)\end{tabular} 
		& \begin{tabular}[t]{@{}l@{}}2(5)\\ 2(3)\\ 3(3)\\ 2(3)\end{tabular} 
		& \begin{tabular}[t]{@{}l@{}}2(2)\\ 2(3)\\ 2(3)\\ 2(4)\end{tabular} 
		& \begin{tabular}[t]{@{}l@{}}6\\ 3\\ 3\\ 3 \end{tabular} 
		& \begin{tabular}[t]{@{}l@{}}3(3)\\ 2(3)\\ 2(2)\\ 2(2)\end{tabular} 
		& \begin{tabular}[t]{@{}l@{}}3(5)\\ 3(3)\\ 3(3)\\ 3(3)\end{tabular} 
		\\ \midrule
		Difficult1
		& \begin{tabular}[t]{@{}l@{}}05\\ 10\\ 15\\ 20\end{tabular} 
		& \begin{tabular}[t]{@{}l@{}}6(7)\\ 3(2)\\ 2(2)\\ 4(3)\end{tabular} 
		& \begin{tabular}[t]{@{}l@{}}6(10)\\ 2(3)\\ 2(3)\\ 2(2)\end{tabular} 
		& \begin{tabular}[t]{@{}l@{}}2(4)\\ 2(4)\\ 2(6)\\ 2(5)\end{tabular} 
		& \begin{tabular}[t]{@{}l@{}}2(4)\\ 2(6)\\ 2(4)\\ 3(7)\end{tabular} 
		& \begin{tabular}[t]{@{}l@{}}4\\ 3\\ 3\\ 4 \end{tabular} 
		& \begin{tabular}[t]{@{}l@{}}3(6)\\ 3(3)\\ 3(3)\\ 3(3)\end{tabular} 
		& \begin{tabular}[t]{@{}l@{}}8(6)\\ 3(3)\\ 2(2)\\ 2(2)\end{tabular} 
		\\ \midrule
		Difficult2
		& \begin{tabular}[t]{@{}l@{}}05\\ 10\\ 15\\ 20\end{tabular}
		& \begin{tabular}[t]{@{}l@{}}8(3)\\ 9(3)\\ 9(3)\\ 8(4)\end{tabular} 
		& \begin{tabular}[t]{@{}l@{}}3(8)\\ 3(6)\\ 2(10)\\ 2(9)\end{tabular} 
		& \begin{tabular}[t]{@{}l@{}}2(4)\\ 2(5)\\ 2(3)\\ 3(4)\end{tabular} 
		& \begin{tabular}[t]{@{}l@{}}2(3)\\ 2(4)\\ 2(3)\\ 2(4)\end{tabular} 
		& \begin{tabular}[t]{@{}l@{}} 4\\ 3\\ 4\\ 3 \end{tabular} 
		& \begin{tabular}[t]{@{}l@{}}3(3)\\ 2(3)\\ 2(2)\\ 2(2)\end{tabular} 
		& \begin{tabular}[t]{@{}l@{}}3(3)\\ 3(6)\\ 3(3)\\ 2(2)\end{tabular} 
		\\ \midrule
		\multicolumn{2}{l|}{Total Hit}
		& 8(7) & 4(3) & 2(6) & 2(5) & 13 & 14(15) & 16(13)
		\\ \bottomrule
	\end{tabular}
	\begin{tablenotes}
		\item DS and NL are the names and the noise levels of datasets, respectively. A, B, C, D, and E stand for the automatic spike sorting frameworks of LE-KM~\cite{chah2011automated}, LPP-SCK~\cite{nguyen2014spike}, LDAKM~\cite{keshtkaran2014unsupervised}, LDAGMM~\cite{keshtkaran2017noise}, and the Wave\_clu~\cite{chaure2018novel} baseline, respectively. P1 and P2 refer to our proposed model with Calinski Harabasz~\cite{zhang2018review} and gap~\cite{nguyen2014spike} index for estimating the number of neurons.
	\end{tablenotes}
\end{table}

\begin{table*}[t!]
	\centering	
	\caption{The average accuracy (\textpm~its standard deviation) and average processing time (\textpm~its standard deviation) of various spike sorting algorithms after repeating 20 times for each dataset without overlapping spikes. Here the number of  classes is fixed to be $3$ and the reduced dimensions of spikes are $2$.}
	\label{tab:acc_without}
	\begin{tabular}{lll|llllllll}
		\toprule
		DS & NL & SN & PCA-KM & \multicolumn{1}{l}{PCA-SCK} & \multicolumn{1}{l}{LE-KM} & LPP-SCK & GF-KM & GF-SCK & LDAKM & Proposed
		\\ \midrule
		Easy1
		& \begin{tabular}[t]{@{}l@{}}05\\ 10\\ 15\\ 20\\ 25\\ 30\\ 35\\ 40\end{tabular}
		& \begin{tabular}[t]{@{}l@{}}2729\\ 2753\\ 2693\\ 2678\\ 2586\\ 2629\\ 2702\\ 2645\end{tabular}
		& \begin{tabular}[t]{@{}l@{}}\textbf{100.00\textpm0.00}\\ \textbf{100.00\textpm0.00}\\ \textbf{100.00\textpm0.00}\\ 99.93\textpm0.00\\ 99.81\textpm0.00\\ 99.20\textpm0.00\\ 97.11\textpm0.00\\ 94.33\textpm0.00 \end{tabular} 
		& \begin{tabular}[t]{@{}l@{}}84.57\textpm15.84\\ 91.27\textpm9.06\\ 85.65\textpm14.98\\ 88.51\textpm11.87\\ 89.24\textpm9.84\\ 98.91\textpm1.05\\ 97.32\textpm0.54\\ 94.66\textpm0.24\end{tabular}
		& \multicolumn{1}{l}{\begin{tabular}[t]{@{}l@{}}\textbf{100.00\textpm0.00}\\ 98.95\textpm0.00\\ 93.13\textpm0.00\\ 87.04\textpm0.00\\ 87.39\textpm0.00 \\ 84.33\textpm0.0 \\ 79.24\textpm0.00 \\ 74.43\textpm0.03\end{tabular}} 
		& \begin{tabular}[t]{@{}l@{}}83.02\textpm12.92\\ 88.34\textpm8.69\\ 80.75\textpm8.49\\ 80.30\textpm8.01\\ 85.53\textpm1.26\\ 77.60\textpm1.32\\ 69.70\textpm9.85\\ 64.07\textpm8.25\end{tabular} 
		& \begin{tabular}[t]{@{}l@{}}\textbf{100.00\textpm0.00}\\ \textbf{100.00\textpm0.00}\\ \textbf{100.00\textpm0.00}\\ \textbf{100.00\textpm0.00}\\ \textbf{100.00\textpm0.00}\\ \textbf{100.00\textpm0.00}\\ 99.96\textpm0.00\\\ 97.58\textpm0.00 \end{tabular} 
		& \begin{tabular}[t]{@{}l@{}}84.30\textpm16.07\\ 78.62\textpm18.08\\ 88.44\textpm15.01\\ 93.90\textpm8.57\\ 85.33\textpm12.21\\ 88.93\textpm12.30\\ 82.90\textpm19.81\\ 92.57\textpm11.60 \end{tabular} 
		& \begin{tabular}[t]{@{}l@{}}\textbf{100.00\textpm0.00}\\ \textbf{100.00\textpm0.00}\\ \textbf{100.00\textpm0.00}\\ \textbf{100.00\textpm0.00}\\ \textbf{100.00\textpm0.00}\\ \textbf{100.00\textpm0.00}\\ 99.99\textpm0.01\\ 99.99\textpm0.02\end{tabular} 
		& \begin{tabular}[t]{@{}l@{}}\textbf{100.00\textpm0.00}\\ \textbf{100.00\textpm0.00}\\ \textbf{100.00\textpm0.00}\\ \textbf{100.00\textpm0.00}\\ \textbf{100.00\textpm0.00}\\ \textbf{100.00\textpm0.00}\\ \textbf{100.00\textpm0.00}\\ \textbf{100.00\textpm0.00}\end{tabular} 
		\\ \midrule
		Easy2
		& \begin{tabular}[t]{@{}l@{}}05\\ 10\\ 15\\ 20\end{tabular}
		& \begin{tabular}[t]{@{}l@{}}2619\\ 2694\\ 2648\\ 2715\end{tabular}
		& \begin{tabular}[t]{@{}l@{}}\textbf{100.00\textpm0.00}\\ 99.07\textpm0.00\\ 92.86\textpm0.00\\ 85.12\textpm0.00 \end{tabular} 
		& \begin{tabular}[t]{@{}l@{}}81.12\textpm15.67\\ 87.15\textpm9.69\\ 78.45\textpm13.84\\ 75.93\textpm11.63 \end{tabular} 
		& \multicolumn{1}{l}{\begin{tabular}[t]{@{}l@{}}\textbf{100.00\textpm0.00}\\ \textbf{100.00\textpm0.00}\\ \textbf{100.00\textpm0.00}\\ 95.54\textpm0.00\end{tabular}} 
		& \begin{tabular}[t]{@{}l@{}}83.44\textpm16.44\\ 89.44\textpm12.66\\ 85.03\textpm17.43\\ 83.65\textpm14.84\end{tabular} 
		& \begin{tabular}[t]{@{}l@{}}\textbf{100.00\textpm0.00}\\ \textbf{100.00\textpm0.00}\\ \textbf{100.00\textpm0.00}\\ 95.17\textpm0.00 \end{tabular} 
		& \begin{tabular}[t]{@{}l@{}}87.49\textpm12.59\\ 89.61\textpm12.36\\ 90.59\textpm15.59\\ 86.24\textpm9.28 \end{tabular} 
		& \begin{tabular}[t]{@{}l@{}}98.60\textpm6.25\\ \textbf{100.00\textpm0.00}\\ \textbf{100.00\textpm0.00}\\ \textbf{100.00\textpm0.00}\end{tabular} 
		& \begin{tabular}[t]{@{}l@{}}\textbf{100.00\textpm0.00}\\ \textbf{100.00\textpm0.00}\\ \textbf{100.00\textpm0.00}\\ \textbf{100.00\textpm0.00}\end{tabular} 
		\\ \midrule
		Difficult1
		& \begin{tabular}[t]{@{}l@{}}05\\ 10\\ 15\\ 20\end{tabular}
		& \begin{tabular}[t]{@{}l@{}}2616\\ 2638\\ 2660\\ 2624\end{tabular}
		& \begin{tabular}[t]{@{}l@{}}97.13\textpm0.00\\ 88.21\textpm0.00\\ 70.23\textpm0.00\\ 56.59\textpm0.02 \end{tabular}
		& \begin{tabular}[t]{@{}l@{}}93.80\textpm2.78\\ 86.81\textpm0.66\\ 53.00\textpm6.77\\ 50.26\textpm4.30 \end{tabular}
		& \multicolumn{1}{l}{\begin{tabular}[t]{@{}l@{}}\textbf{100.00\textpm0.00}\\ 88.07\textpm0.53\\ 81.99\textpm0.03\\ 60.94\textpm0.00\end{tabular}} 
		& \begin{tabular}[t]{@{}l@{}}81.61\textpm19.54\\ 98.35\textpm3.77\\ 80.93\textpm0.51\\ 53.65\textpm4.20\end{tabular} 
		& \begin{tabular}[t]{@{}l@{}}\textbf{100.00\textpm0.00}\\ 99.70\textpm0.00\\ 81.78\textpm0.05\\ 58.96\textpm0.00 \end{tabular} 
		& \begin{tabular}[t]{@{}l@{}}83.93\textpm16.49\\ 93.06\textpm13.44\\ 81.30\textpm0.61\\ 56.71\textpm3.07 \end{tabular} 
		& \begin{tabular}[t]{@{}l@{}}\textbf{100.00\textpm0.00}\\ 99.12\textpm3.94\\ 98.98\textpm4.54\\ 98.86\textpm4.54\end{tabular} 
		& \begin{tabular}[t]{@{}l@{}}\textbf{100.00\textpm0.00}\\ \textbf{100.00\textpm0.00}\\ \textbf{100.00\textpm0.00}\\ \textbf{100.00\textpm0.00}\end{tabular} 
		\\ \midrule
		Difficult2
		& \begin{tabular}[t]{@{}l@{}}05\\ 10\\ 15\\ 20\end{tabular}
		& \begin{tabular}[t]{@{}l@{}}2535\\ 2742\\ 2631\\ 2716\end{tabular}
		& \begin{tabular}[t]{@{}l@{}}\textbf{100.00\textpm0.00}\\ 99.85\textpm0.00\\ 92.25\textpm0.00\\ 74.11\textpm0.01	\end{tabular} 
		& \begin{tabular}[t]{@{}l@{}}81.30\textpm15.03\\ 95.07\textpm7.54\\ 82.63\textpm10.31\\ 73.77\textpm0.28 \end{tabular} 
		& \multicolumn{1}{l}{\begin{tabular}[t]{@{}l@{}}\textbf{100.00\textpm0.00}\\ \textbf{100.00\textpm0.00}\\ \textbf{100.00\textpm0.00}\\ \textbf{100.00\textpm0.00}\end{tabular}} 
		& \begin{tabular}[t]{@{}l@{}}85.06\textpm16.57\\ 88.15\textpm17.65\\ 87.53\textpm11.63\\ 81.07\textpm14.72\end{tabular} 
		& \begin{tabular}[t]{@{}l@{}}\textbf{100.00\textpm0.00}\\ \textbf{100.00\textpm0.00}\\ \textbf{100.00\textpm0.00}\\ \textbf{100.00\textpm0.00} \end{tabular} 
		& \begin{tabular}[t]{@{}l@{}}84.84\textpm19.04\\ 86.33\textpm14.61\\ 91.17\textpm12.44\\ 84.29\textpm14.31 \end{tabular} 
		& \begin{tabular}[t]{@{}l@{}}\textbf{100.00\textpm0.00}\\ \textbf{100.00\textpm0.00}\\ 98.51\textpm6.67\\ \textbf{100.00\textpm0.00}\end{tabular} 
		& \begin{tabular}[t]{@{}l@{}}\textbf{100.00\textpm0.00}\\ \textbf{100.00\textpm0.00}\\ \textbf{100.00\textpm0.00}\\ \textbf{100.00\textpm0.00}\end{tabular} 
		\\ \midrule
		\multicolumn{3}{l|}{Average Time / s}
		& 0.03\textpm0.01 & 0.22\textpm0.01 & 0.20\textpm0.03 & 0.38\textpm0.01 & 0.22\textpm0.03 & 0.43\textpm0.01 & 0.63\textpm0.16 & 0.17\textpm0.14 
		\\ \bottomrule
	\end{tabular}
\end{table*}
Four automatic spike sorting methods (LE-KM~\cite{chah2011automated}, LPP-SCK~\cite{nguyen2014spike}, LDAKM~\cite{keshtkaran2014unsupervised}, LDAGMM~\cite{keshtkaran2017noise}) are compared in estimating the number of neurons on the whole datasets with/without overlapping spikes. Here, SCK is short for landmark-based spectral clustering using KM for landmark-selection~\cite{chen2011large}.
Additionally, the baseline method~\cite{chaure2018novel} performed on Wave\_clus was contrasted on the whole datasets only with overlapping spikes. It should be noted that, since several works~\cite{lefebvre2016recent, pedreira2012many} have emphasised that the number of neurons detected by extracellular electrodes should lie between 8 and 10, in our experiments a more reasonable initial range of the estimated number of neurons is set to [2, 10].
The reduced dimension, $m_0$, at the initialisation stage in our method is assigned to $3$, and the parameters for all compared automatic methods are set by default.
Actually, different values of $m_0$ result in various numbers of estimated neurons but it tends to be stable as $m_0$ increases. Fig.\ref{fig:est_number} shows the inference of $m_0$ to the estimated number of neurons on two datasets without overlapping spikes. Since other datasets have similar results, they are omitted here.

The results of different methods in estimating the number of neurons are presented in Table \ref{tab:num_neural}. Note that the distinct spikes in each dataset are generated by 3 neurons.
The number in parentheses indicates that the results are deduced on the dataset without the influence from overlapping spikes.
It can be seen from Table~\ref{tab:num_neural} that the results of our proposed method reach the baseline level and relatively outperform those of LE-KM, LPP-SCK, LDAKM, and LDAGMM. The main reasons include, 1) the estimation results from LE-KM and LPP-SCK are sensitive to the reduced dimensions of features. 2) For LDAKM and LDAGMM, the estimation process is mixed in the iteration optimisation, making the results susceptible to the model's random initial point and vibrant iteration. 3) Our proposed method integrated with Calinski Harabasz~\cite{zhang2018review} or gap~\cite{nguyen2014spike} index succeed largely with the stable input of the estimation from PCA initialisation as shown in Eq.\eqref{equ:pca_pro}.

\subsubsection{Overall Accuracy and Time Consumption}
Secondly, the overall accuracy and time consumption of various spike sorting methods are compared. The state-of-the-art spike feature extraction methods, PCA~\cite{adamos2008performance}, LE~\cite{chah2011automated}, LPP~\cite{nguyen2014spike}, GF~\cite{ghanbari2010graph}, LDA~\cite{keshtkaran2014unsupervised,keshtkaran2017noise}, along with three automatic spike sorting schemes, LE-KM~\cite{chah2011automated}, LPP-SCK~\cite{nguyen2014spike}, LDAKM~\cite{keshtkaran2014unsupervised,keshtkaran2017noise} are compared. Noted that the results of LDAGMM, which are similar to those of LDAKM, are not shown here due to the space limitation. Further, we also combined either PCA or GF with KM and SCK as spike sorting methods for more general comparisons. 
For the sake of fair comparison, the number of neurons and the reduced dimension for each method are assigned as $3$ and $2$, respectively. 
\begin{table*}[!h]
	\centering	
	\caption{The average accuracy (\textpm~its standard deviation) and average processing time (\textpm~its standard deviation) of various spike sorting algorithms after repeating 20 times for each dataset. Here the number of neurons and the reduced dimension are fixed with $3$ and $2$, respectively.}
	\label{tab:acc_with}
	\begin{tabular}{@{}lll|llllllll@{}}
		\toprule
		DS & NL & SN & PCA-KM & PCA-SCK & LE-KM & LPP-SCK & GF-KM & GF-SCK & LDAKM & Proposed (Time) 
		\\ \midrule
		Easy1
		& \begin{tabular}[t]{@{}l@{}}05\\ 10\\ 15\\ 20\\ 25\\ 30\\ 35\\ 40\end{tabular}
		& \begin{tabular}[t]{@{}l@{}}3514\\ 3522\\ 3477\\ 3474\\ 3298\\ 3475\\ 3534\\ 3386\end{tabular}
		& \begin{tabular}[t]{@{}l@{}}99.23\textpm0.00\\ 99.40\textpm0.00\\ 99.19\textpm0.00\\ 99.11\textpm0.00\\ 99.00\textpm0.00\\ 98.04\textpm0.00\\ 95.90\textpm0.01\\ 92.98\textpm0.01 \end{tabular} 
		& \begin{tabular}[t]{@{}l@{}}99.24\textpm0.08\\ 99.44\textpm0.03\\ 99.18\textpm0.03\\ 99.02\textpm0.08\\ 98.99\textpm0.08\\ 98.22\textpm0.13\\ 96.28\textpm0.12\\ 92.60\textpm0.40 \end{tabular} 
		& \begin{tabular}[t]{@{}l@{}}67.59\textpm0.05\\ 98.38\textpm0.00\\ 96.87\textpm0.00\\ 94.67\textpm0.00\\ 90.72\textpm0.00\\ 86.42\textpm0.00\\ 80.48\textpm0.00\\ 73.72\textpm0.00\end{tabular} 
		& \begin{tabular}[t]{@{}l@{}}96.20\textpm0.71\\ 97.91\textpm0.41\\ 94.93\textpm0.89\\ 89.04\textpm3.29\\ 81.35\textpm3.42\\ 71.43\textpm3.29\\ 79.55\textpm0.47\\ 70.06\textpm0.61\end{tabular} 
		& \begin{tabular}[t]{@{}l@{}}99.35\textpm0.00\\ 99.55\textpm0.00\\ 99.65\textpm0.00\\ 99.77\textpm0.00\\ 99.42\textpm0.00\\ 99.45\textpm0.00\\ 99.35\textpm0.00\\ 98.49\textpm0.00 \end{tabular} 
		& \begin{tabular}[t]{@{}l@{}}99.31\textpm0.11\\ 99.55\textpm0.11\\ 99.62\textpm0.03\\ 99.57\textpm0.04\\ 90.50\textpm14.54\\ \textbf{99.70\textpm0.04}\\ 99.61\textpm0.02\\ 99.53\textpm0.15\end{tabular} 
		& \begin{tabular}[t]{@{}l@{}}97.85\textpm7.30\\ 99.66\textpm0.14\\ 99.65\textpm0.09\\ 99.77\textpm0.06\\ 99.64\textpm0.06\\ 99.60\textpm0.02\\ 99.55\textpm0.05\\ 99.69\textpm0.07\end{tabular} 
		& \begin{tabular}[t]{@{}l@{}}\textbf{99.52\textpm0.00} (0.24)\\ \textbf{99.77\textpm0.00} (0.24)\\ \textbf{99.68\textpm0.00} (0.25)\\ \textbf{99.80\textpm0.00} (0.24)\\ \textbf{99.76\textpm0.00} (0.23)\\ 99.60\textpm0.00 (0.39)\\ \textbf{99.63\textpm0.00} (0.25)\\ \textbf{99.79\textpm0.00} (0.29)\end{tabular} 
		\\ \midrule
		Easy2
		& \begin{tabular}[t]{@{}l@{}}05\\ 10\\ 15\\ 20\end{tabular}
		& \begin{tabular}[t]{@{}l@{}}3410\\ 3520\\ 3411\\ 3526\end{tabular}
		& \begin{tabular}[t]{@{}l@{}}98.09\textpm0.00\\ 95.94\textpm0.00\\ 90.00\textpm0.00\\ 82.99\textpm0.01\end{tabular} 
		& \begin{tabular}[t]{@{}l@{}}97.82\textpm0.16\\ 95.49\textpm0.63\\ 87.45\textpm0.84\\ 82.68\textpm0.84 \end{tabular} 
		& \begin{tabular}[t]{@{}l@{}}98.80\textpm0.00\\ 99.26\textpm0.00\\ 98.59\textpm0.00\\ 93.85\textpm0.00\end{tabular} 
		& \begin{tabular}[t]{@{}l@{}}98.57\textpm0.04\\ 99.18\textpm0.02\\ 98.59\textpm0.11\\ 93.81\textpm0.84\end{tabular} 
		& \begin{tabular}[t]{@{}l@{}}99.12\textpm0.00\\ 99.74\textpm0.00\\ 99.38\textpm0.00\\ 91.98\textpm0.01 \end{tabular} 
		& \begin{tabular}[t]{@{}l@{}}99.06\textpm0.18\\ 99.75\textpm0.03\\ 99.36\textpm0.15\\ 91.06\textpm0.91\end{tabular} 
		& \begin{tabular}[t]{@{}l@{}}96.97\textpm8.27\\ 98.52\textpm5.49\\ 98.37\textpm5.71\\ 99.60\textpm0.06\end{tabular} 
		& \begin{tabular}[t]{@{}l@{}}\textbf{99.74\textpm0.00} (0.30)\\ \textbf{99.77\textpm0.00} (0.40)\\ \textbf{99.79\textpm0.00} (0.33)\\ \textbf{99.74\textpm0.00} (0.41)\end{tabular} 
		\\ \midrule
		Difficult1
		& \begin{tabular}[t]{@{}l@{}}05\\ 10\\ 15\\ 20\end{tabular}
		& \begin{tabular}[t]{@{}l@{}}3383\\ 3448\\ 3472\\ 3414\end{tabular}
		& \begin{tabular}[t]{@{}l@{}}95.71\textpm0.00\\ 86.06\textpm0.01\\ 68.32\textpm0.05\\ 55.58\textpm0.03 \end{tabular} 
		& \begin{tabular}[t]{@{}l@{}}94.94\textpm1.20\\ 85.48\textpm0.78\\ 50.52\textpm4.08\\ 47.27\textpm3.05 \end{tabular} 
		& \begin{tabular}[t]{@{}l@{}}97.37\textpm0.00\\ 97.51\textpm0.00\\ 76.37\textpm0.04\\ 48.22\textpm0.19\end{tabular} 
		& \begin{tabular}[t]{@{}l@{}}98.44\textpm0.09\\ 97.35\textpm0.47\\ 71.61\textpm7.47\\ 45.97\textpm2.65\end{tabular} 
		& \begin{tabular}[t]{@{}l@{}}97.87\textpm0.00\\ 69.37\textpm0.00\\ 79.52\textpm0.01\\ 47.51\textpm0.00 \end{tabular} 
		& \begin{tabular}[t]{@{}l@{}}97.77\textpm0.12\\ 96.33\textpm2.34\\ 77.07\textpm4.94\\ 48.54\textpm3.73 \end{tabular} 
		& \begin{tabular}[t]{@{}l@{}}95.57\textpm10.08\\ 96.52\textpm6.86\\ 97.53\textpm2.15\\ 98.42\textpm0.08\end{tabular} 
		& \begin{tabular}[t]{@{}l@{}}\textbf{99.35\textpm0.00} (0.39)\\ \textbf{99.07\textpm0.00} (0.41)\\ \textbf{98.96\textpm0.01} (0.56)\\ \textbf{98.79\textpm0.02} (1.01)\end{tabular} 
		\\ \midrule
		Difficult2
		& \begin{tabular}[t]{@{}l@{}}05\\ 10\\ 15\\ 20\end{tabular}
		& \begin{tabular}[t]{@{}l@{}}3364\\ 3462\\ 3440\\ 3493\end{tabular}
		& \begin{tabular}[t]{@{}l@{}}98.54\textpm0.00\\ 97.66\textpm0.00\\ 86.65\textpm0.01\\ 68.31\textpm0.00 \end{tabular} 
		& \begin{tabular}[t]{@{}l@{}}98.46\textpm0.03\\ 97.76\textpm0.05\\ 86.28\textpm0.50\\ 67.67\textpm0.81 \end{tabular} 
		& \begin{tabular}[t]{@{}l@{}}99.67\textpm0.00\\ 99.77\textpm0.00\\ 99.62\textpm0.07\\ 99.63\textpm0.00\end{tabular} 
		& \begin{tabular}[t]{@{}l@{}}99.67\textpm0.07\\ 99.76\textpm0.02\\ 99.49\textpm0.03\\ 99.56\textpm0.03\end{tabular} 
		& \begin{tabular}[t]{@{}l@{}}99.67\textpm0.00\\ 99.62\textpm0.00\\ 99.68\textpm0.00\\ 99.74\textpm0.00 \end{tabular} 
		& \begin{tabular}[t]{@{}l@{}}99.66\textpm0.06\\ 99.71\textpm0.06\\ 99.63\textpm0.05\\ 99.73\textpm0.03\end{tabular} 
		& \begin{tabular}[t]{@{}l@{}}99.71\textpm0.04\\ 99.86\textpm0.03\\ 99.76\textpm0.02\\ 99.85\textpm0.06\end{tabular} 
		& \begin{tabular}[t]{@{}l@{}}\textbf{99.79\textpm0.00} (0.45)\\ \textbf{99.86\textpm0.00} (0.31)\\ \textbf{99.80\textpm0.00} (0.32)\\ \textbf{99.89\textpm0.00} (1.10)\end{tabular} 
		\\ \midrule
		\multicolumn{3}{l|}{Average Time / s} &	0.04\textpm0.02 & 0.27\textpm0.01 & 0.31\textpm0.03 & 0.52\textpm0.02 & 0.36\textpm0.02 & 0.62\textpm0.03 & 0.64\textpm0.14 & 0.41\textpm0.24 
		\\ \bottomrule
	\end{tabular}
\end{table*}


Specifically, in LE-KM, the number of nearest neighbours is, by default, set to be $12$, however, the number of times to randomly repeat KM is not specified, and we set it to be $10$ (it is equal to the number of K-means++ initialisations in our proposed method as mentioned in subsection~\ref{subsubsct:updateG}). Besides, the points of landmarks in SCK and the number of neighbours for each sample in LPP are, by default, fixed to be $1000$ and $5$, respectively. As for LDAKM, the open-source code is publicly accessed~\cite{ldagmm2017code}, while the configurations of our proposed method are detailed in Alg.\ref{alg:PCAKM}~\footnote{Please visit the link, \url{https://github.com/HLBayes/Unified-SS}, to access the Matlab codes about the proposed method in this paper.}. Note that, $20$ independent times are performed in each method on each dataset and the final average results along with its standard deviation are recorded.
In our experiments, whether the overlapping spikes exist or not in the datasets is compared.

As shown in Table~\ref{tab:acc_without}, without overlapping spikes, nearly all methods could get acceptable results on datasets with small noise levels. But when the noise level increases, it seems that only the model-based method (i.e., LDAKM and our proposed method) can maintain high accuracy, while other combined feature extraction and clustering in sequence methods failed. On the other hand, the results of methods combined with SCK (i.e., PCA-SCK, LPP-SCK, GF-SCK) and LDAKM are not stable. For example, the standard deviations of the average accuracy for LDAKM are around $5\%$ and those for the SCK-based methods are even more than $15\%$ in the datasets of Easy2\_noise05, Difficult1\_noise10, etc. The standard deviations of the methods combined with KM (i.e., PCA-KM, LE-KM, GF-KM) are comparable with setting a sufficient number of initialisation, however, the average accuracies are not satisfactory. On the contrary, by initialising with K-means++ strategy and updating with comparison rule whose details are shown in section~\ref{subsubsct:updateG}, our proposed method obtains the near-perfect accuracies. At the same time, the processing time is also less than others. What's more interesting is that by calculating the standard deviation of the processing time, the minimum processing time of our proposed method is equal to the simple combined method, PCA-KM. This means the proposed method can automatically adjust the processing time according to the complexity of the dataset. These findings are also consistent with the presence of overlapping spikes as shown in Table~\ref{tab:acc_with}, in which the exact processing time of our method on each dataset is manifested. 
\begin{figure*}[t]
	\centering
	\begin{subfigure}{0.30\linewidth}
		\centering
		\includegraphics[width=1\linewidth]{./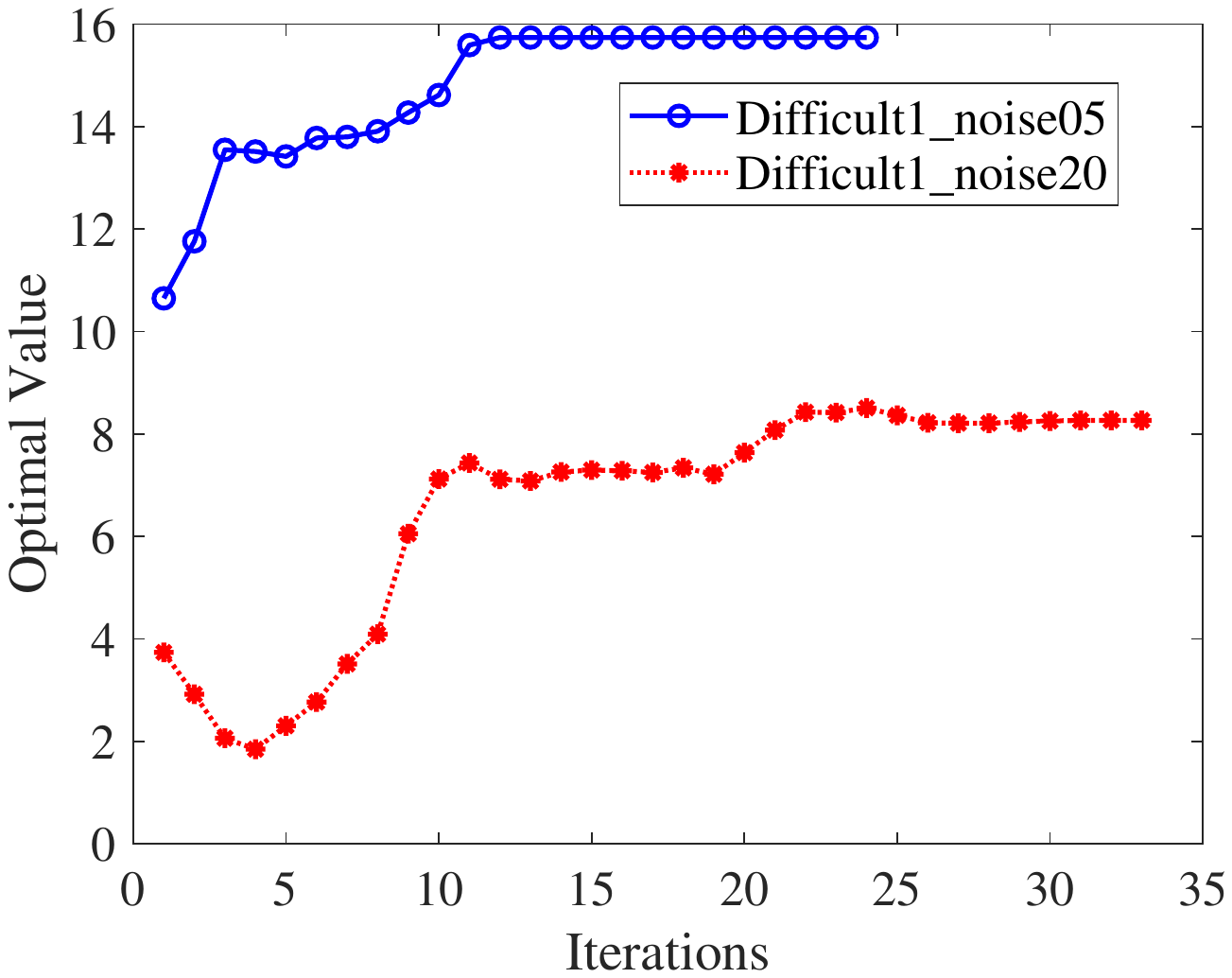} 
		\caption{LDAKM}
		\label{fig:valueLDAKM}
	\end{subfigure}
	\begin{subfigure}{0.30\linewidth}
		\centering
		\includegraphics[width=1\linewidth]{./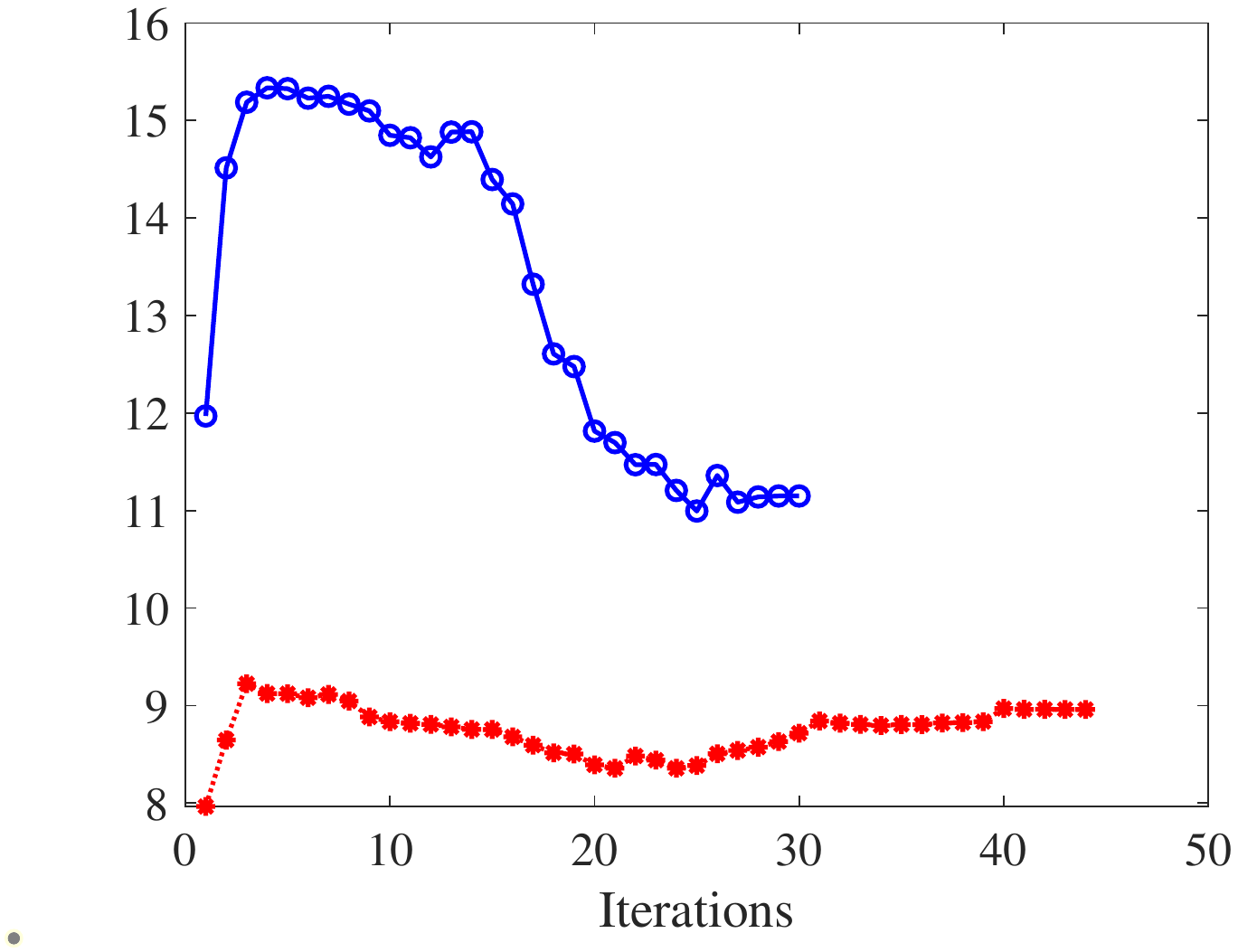} 
		\caption{LDAGMM}
		\label{fig:valueLDAGMM}
	\end{subfigure}
	\begin{subfigure}{0.30\linewidth}
		\centering
		\includegraphics[width=1\linewidth]{./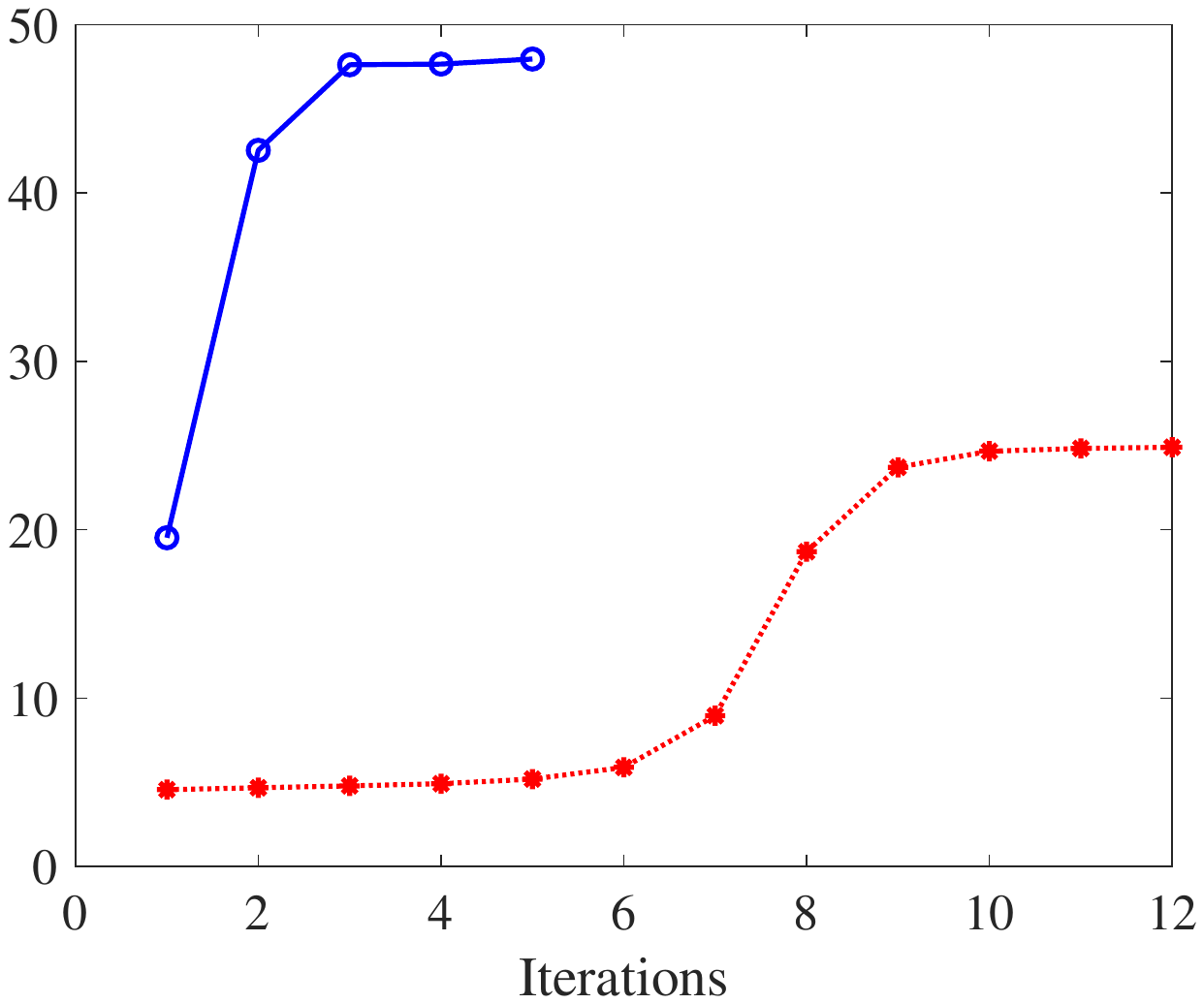} 
		\caption{Proposed}
		\label{fig:valuePCAKM}
	\end{subfigure}
	\caption{The optimal value versus the times of iteration on two datasets of (a) LDAKM, (b) LDAGMM, and (c) our proposed method.}
	\label{fig:valueObj}
\end{figure*}
\subsubsection{Monotonicity Experiments}
Finally, in order to verify the monotonicity of our proposed method, we graph the model objective value versus the times of iteration. The results on Difficult1\_noiseo5 and Difficult1\_noise20 are illustrated while other datasets with similar results are ignored. At the same time, we also compared the model-based spike sorting methods, LDAKM and LDAGMM. As shown in Fig.\ref{fig:valuePCAKM}, the proposed method always iterates towards the increasing objective values and converges after about $10$ iterations. Although LDAKM has a trend of increasing the objective values overall, its optimal value is not always increasing, since the objective function of the two-step solution for LDAKM is different~\cite{wang2019unsupervised}. Furthermore, LDAGMM adds GMM operations after each KM~\cite{keshtkaran2017noise}, making its objective value even iterate toward a decreasing direction. LDAKM and LDAGMM need more iterations (Fig.\ref{fig:valueLDAKM} and Fig.\ref{fig:valueLDAGMM}), and also more computational resources (Tables~\ref{tab:acc_without} and~\ref{tab:acc_with}).

\subsection{Experiment on Real-world Dataset - ``HC1"}
In this section, the real-world dataset, HC1~\cite{harris2000accuracy}, is used and four automatic spike sorting methods, LE-KM~\cite{chah2011automated}, LPP-SCK~\cite{nguyen2014spike}, LDAKM~\cite{keshtkaran2014unsupervised}, LDAGMM~\cite{keshtkaran2017noise}, are compared. Similar as~\cite{ekanadham2014unified,keshtkaran2017noise}, we firstly filter the raw recorded signal using a highpass Butterworth filter at 250 Hz with order 50. Then, spikes were identified by exceeding a threshold that is assigned by~\cite{ekanadham2014unified}. For more detailed preprocessing about the spike filtering and detection, please refer to the open source~\cite{tetrode2014process}. Actually, after these preprocessings, all the ground truth spikes recorded by the intracellular electrode are detected. Then the spikes from the tetrode are concatenated into a vector for further automatic sorting. Finally, according to the existing ground truth intracellularly recorded from one neuron, we can obtain the corresponding FPR and FNR.
\begin{figure}[!htb]
	\centering
	\includegraphics[width=0.49\textwidth]{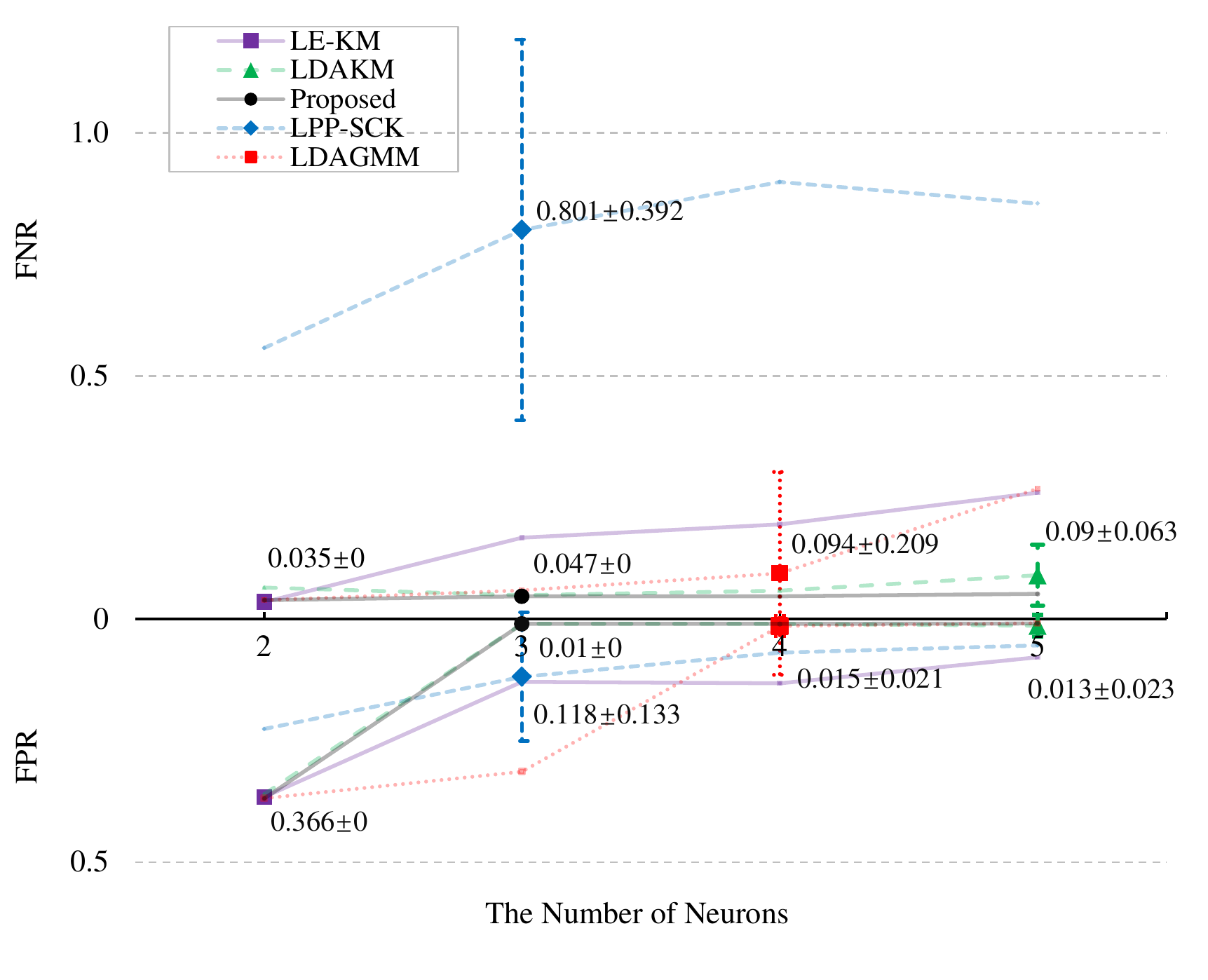}
	\caption{FPR and FNR obtained by several automatic spike sorting methods, i.e., LE-KM, LDAKM, LPP-SCK, LDAGMM and our proposed one on the HC1 datasets. Here, results with the estimated neurons from 2 to 5 are included, and among them, the bold dots indicate the actual estimated results (along with their standard deviation) for each method.}
	\label{fig:missVSfpr}
\end{figure}

In Figure \ref{fig:missVSfpr}, the results of five automatic spike sorting methods are shown when the estimated neurons ranges from $2$ to $5$. Among them, the actual estimated results are emphasised with bold dots and their numerical values are also accompanied. On one hand, it is easy to find that the results of LDAKM, LPP-SCK, and LDAGMM are unstable, especially the standard deviation of the FNR of LPP-SCK reaches $39.2\%$. On the other hand, LDAKM and our proposed method get better results for different estimated neurons compared with LE-KM, LPP-SCK and LDAGMM, although the actual estimated number is different between them (the proposed method is $3$ while LDAKM is $5$).

In addition to FPR and FNR, we also showed the mean waveforms of ground truth and the distribution of spikes obtained by different methods. Specifically, assuming the estimated number of neurons is $\tilde{c}$, when $\tilde{c}\leq 3$, the graphical dimension is $\tilde{c}-1$ while when $\tilde{c}>3$, the dimension is $3$. Meanwhile, the average waveform from each class of spike is also shown under the corresponding panel. As we can see from~Fig.\ref{fig:last}, the number of neurons and their waveforms obtained by various methods are different. Relatively speaking, the results of our proposed method are better since the graphical results show three distinctly different clusters and waveforms. In contrast, the results of LE-KM (Fig.\ref{fig:last-LEKM}) and LPP-LCK (Fig.\ref{fig:last-LPPLCK}) do not separate the spikes reasonably. With further analysing the average waveforms, we can see that the second (red) waveforms from our proposed method are more consistent with the ground truth compared with the third (yellow) waveforms from LDAKM and the second (red) waveforms from LDAGMM. Besides, it appears that the first (blue) and second (red), fourth (purple) and fifth (green) waveforms of LDAKM (Fig.\ref{fig:last-LDAKM}) are very similar, and we prefer to merge them together. 
Similarly, for LDAGMM (Fig.\ref{fig:last-LDAGMM}), the third (yellow) and fourth (purple) waveforms are also very similar.

In summary, our proposed method exhibits excellent accuracy and stability in both synthetic and real-world datasets. 
\begin{figure*}[!t]
	\centering
	\begin{subfigure}{0.3\linewidth}
		\centering
		\includegraphics[width=1\linewidth,height=4.0cm]{./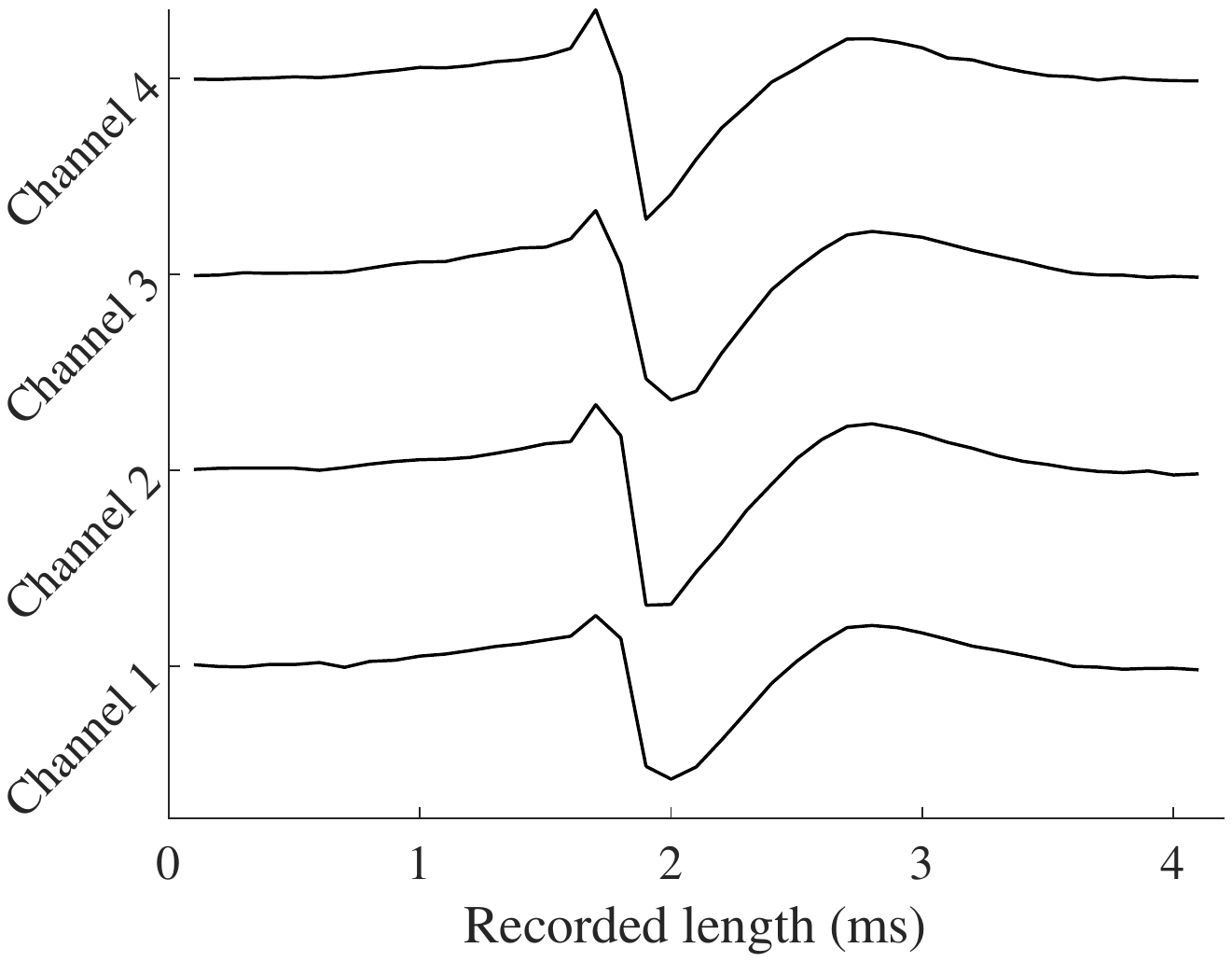}
		\caption{Ground truth}
		\label{fig:last-groundtruth}
	\end{subfigure}
	\begin{subfigure}{0.3\linewidth}
		\centering
		\includegraphics[width=1\linewidth,height=4.0cm]{./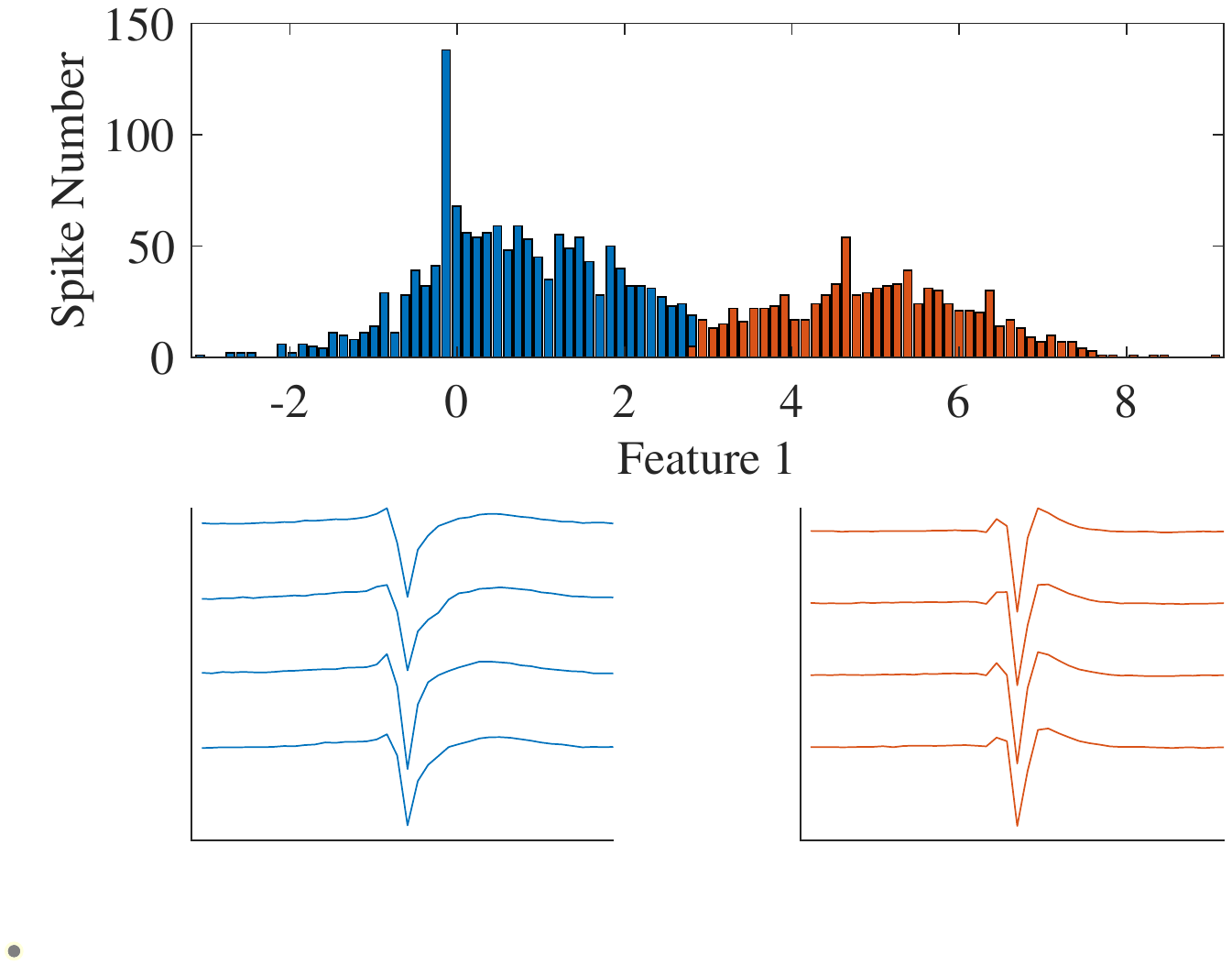}
		\caption{LE-KM}
		\label{fig:last-LEKM}
	\end{subfigure}
	\begin{subfigure}{0.3\linewidth}
		\centering
		\includegraphics[width=1\linewidth,height=4.0cm]{./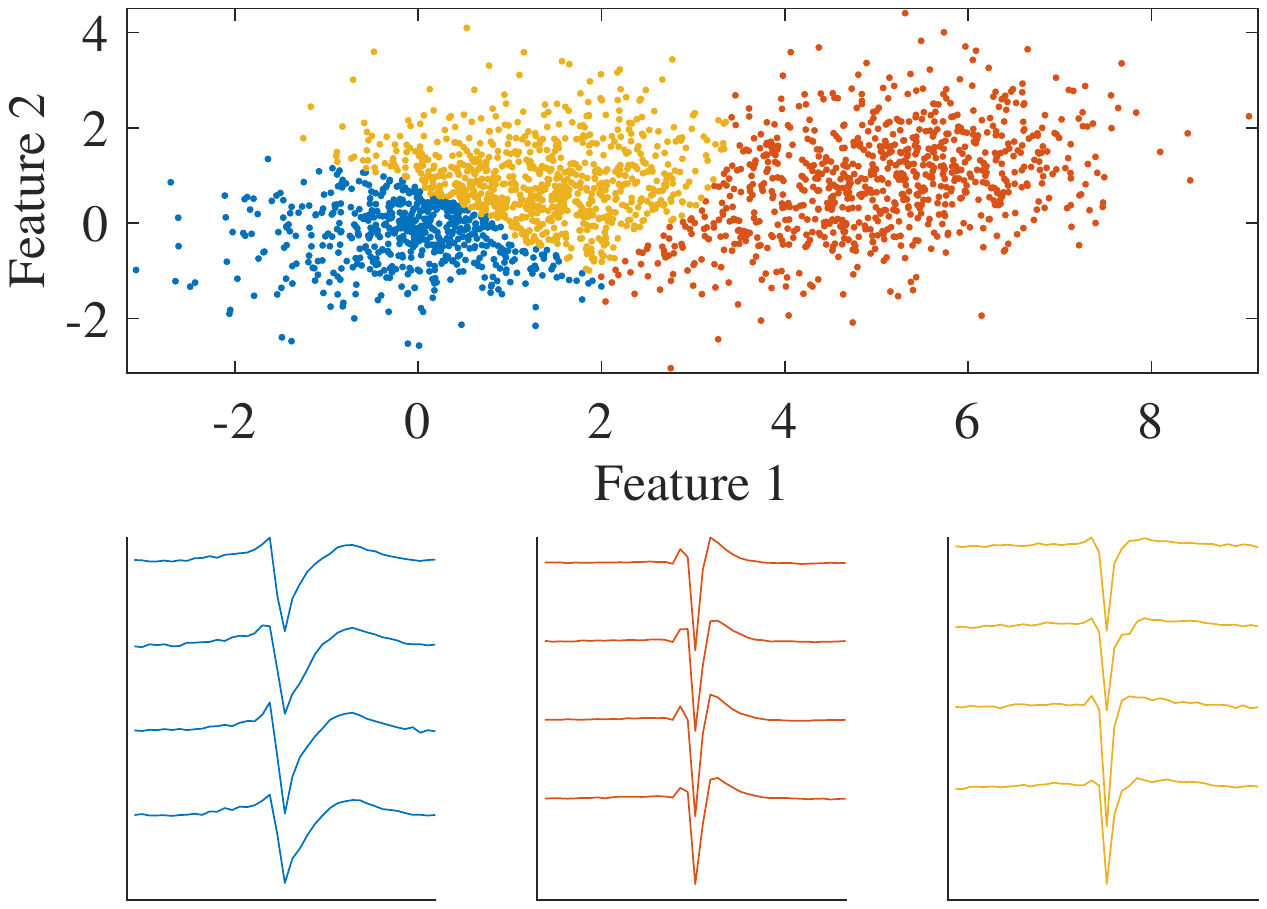}
		\caption{LPP-LCK}
		\label{fig:last-LPPLCK}
	\end{subfigure}
	\newline
	\begin{subfigure}{0.3\linewidth}
		\centering
		\includegraphics[width=1\linewidth,height=4.0cm]{./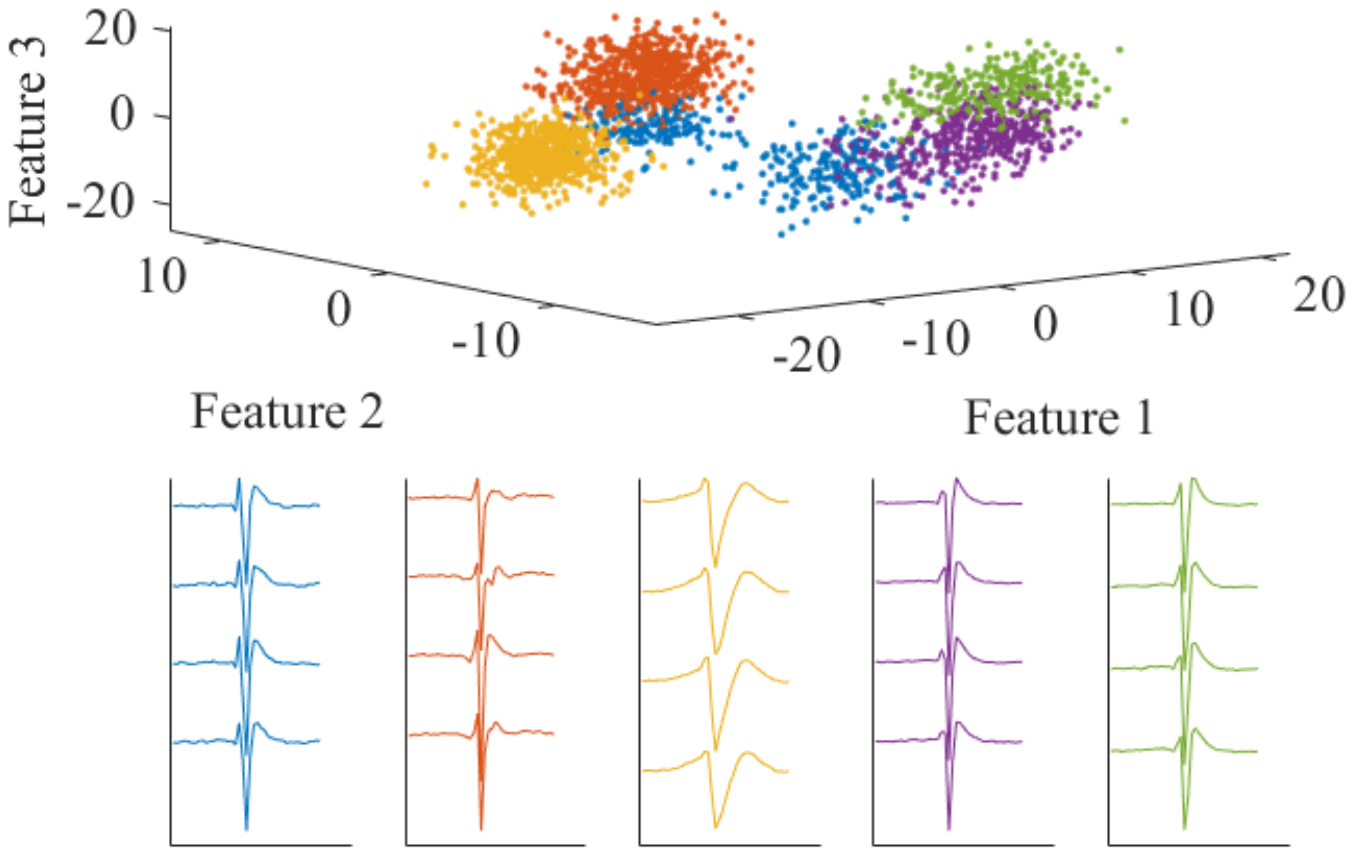}
		\caption{LDAKM}
		\label{fig:last-LDAKM}
	\end{subfigure}
	\begin{subfigure}{0.3\linewidth}
		\centering
		\includegraphics[width=1\linewidth,height=4.0cm]{./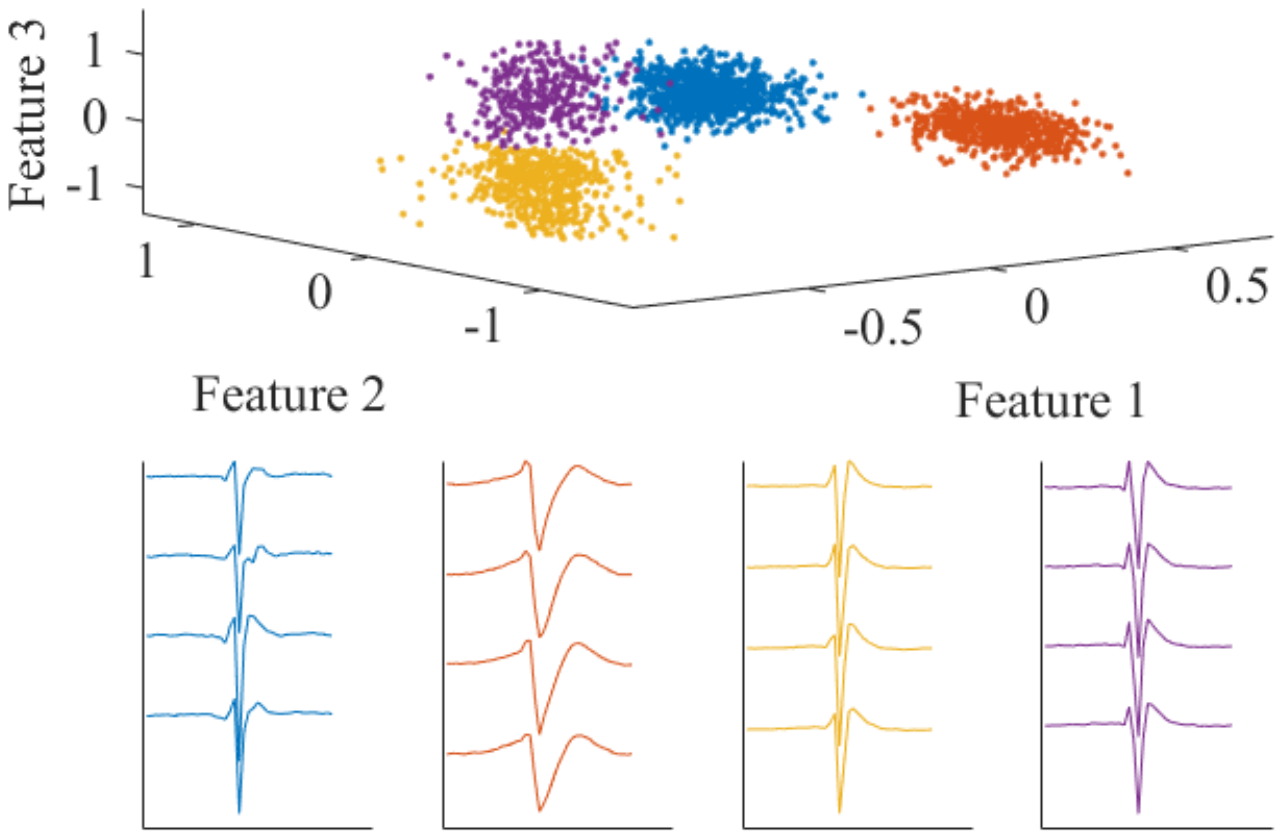}
		\caption{LDAGMM}
		\label{fig:last-LDAGMM}
	\end{subfigure}
	\begin{subfigure}{0.3\linewidth}
		\centering
		\includegraphics[width=1\linewidth,height=4.0cm]{./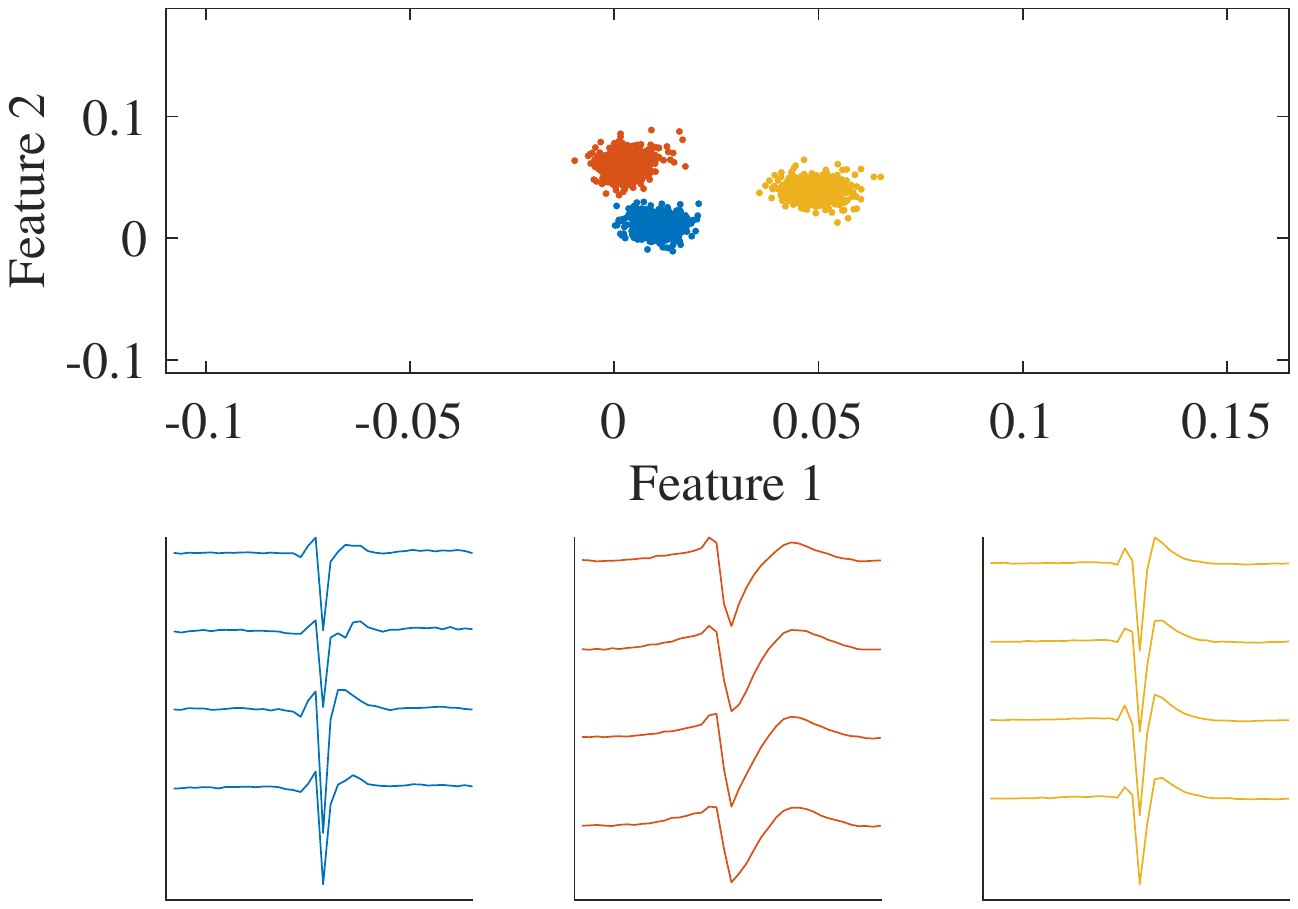}
		\caption{Proposed}
		\label{fig:last-Proposed}
	\end{subfigure}
	\caption{The mean waveform of ground truth is shown in (a), along with the low-dimensional distributions of spikes and their mean waveforms of each cluster from several automatic spike sorting schemes: (b) LE-KM, (c) LPP-LCK, (d) LDAKM, (e) LDAGMM, and (f) our proposed method.}
	\label{fig:last}
\end{figure*}

\section{Conclusions}\label{sec:dis}

This paper studied a optimisation problem for spike sorting which unified the feature extraction and clustering. It was basically solved in an integrated framework with PCA and K-means. The computational complexity of this model is linear to the number of spikes, which is less than existing methods. In particular, we proposed an efficient and effective automatic spike sorting method after integrating existing strategies to estimate the number of neurons. The yielded satisfactory and stable results on both synthetic and real datasets verified the efficacy of our proposed method.

\appendices
\section{Proofs} \label{app:A}
\textit{Proof (of Proposition \ref{prop:updateW}):} 
Let $S_1=XHX^T$ and $S_2=XH\bracket{I-G\bracket{G^TG}^{-1}G^T}HX^T$, we can simplify \eqref{equ:updateW} as,
\begin{equation}\label{equ:appB1}
\begin{split}
\max_{W}& \ \ \tr\curlybrace{\frac{W^TS_1W}{W^TS_2W}},\\
	s.t.,& \ \ W^TW=I.
\end{split}  \tag{{\ref{app:A}{.1}}}
\end{equation}
Because $S_1$ and $S_2$ are symmetric matrix, there exists a $d\times d$ nonsingular square matrix $A$ so that,
\begin{equation}\label{equ:appB2}
\begin{split}
	A^TW^TS_2WA=I_d, \ \ \mathrm{and} \ \ A^TW^TS_1WA=M_d,
\end{split}\tag{{\ref{app:A}{.2}}}
\end{equation}
where $M_d$ is a $d\times d$ diagonalised matrix. According to \eqref{equ:app3}, \eqref{equ:app4}, \eqref{equ:appB1}, and \eqref{equ:appB2}, we have,
\begin{align*}
	S_2^{-1}S_1W&=W\bracket{W^TS_2W}^{-1}\bracket{W^TS_1W} \nonumber\\
				&=W\bracket{\bracket{A^T}^{-1}I_dA^{-1}}^{-1}\bracket{\bracket{A^T}^{-1}M_dA^{-1}}\nonumber\\
				&=WAM_dA^{-1},
\end{align*}
that is, $S_2^{-1}S_1WA=WAM_d$. Besides, 
\begin{equation*}
\begin{split}
	&\tr\curlybrace{\frac{W^TS_1W}{W^TS_2W}} \\
	&=\tr\curlybrace{\bracket{A^TW^TS_2WA}^{-1}\bracket{A^TW^TS_1WA}},\\	
	&=\tr\curlybrace{\frac{A^TW^TS_1WA}{A^TW^TS_2WA}}=\tr\curlybrace{M_d}.
\end{split}
\end{equation*}
Therefore, the optimal value of \eqref{equ:appB1} can be determined by the $m$ largest eigenvalues of $M_d$. Also the corresponding eigenvectors in $WA$ are the optimal solution of \eqref{equ:appB1}. Certainly such eigenvalues and eigenvectors could be deduced from generalised eigenvalue decomposition on $S_2^{-1}S_1$, so they also satisfy the decomposition \eqref{equ:pro} and Proposition \ref{prop:updateW} is proved.

\section{Main mathematical Properties}\label{app:B}
The following equations
are used in the Appendix~\ref{app:A}.
\begin{flalign} \label{equ:app1}
    \frac{d}{dX}\tr\curlybrace{X^TYX}=\left(Y+Y^T\right)X \tag{{\ref{app:B}{.1}}} &&
\end{flalign}
\begin{flalign} \label{equ:app2}
    &\frac{d}{dX}\tr\curlybrace{\frac{Y}{X^TSX}} \nonumber\\
    &=-SX\left(X^TSX\right)^{-1}\bracket{Y+Y^T}\bracket{X^TSX}^{-1} \tag{{\ref{app:B}{.2}}} &&
\end{flalign}
\begin{flalign} \label{equ:app3}
    &\frac{d}{dX}\tr\curlybrace{\frac{X^TS_1X}{X^TS_2X}} \nonumber \\
    &=\left.\frac{d}{dX_1}\tr\curlybrace{\frac{X^TS_1X}{X_1^TS_2X_1}}\right|_{X_1=X}\nonumber\\
    & \ \ \ +\left.\frac{d}{dX_2}\tr\curlybrace{\frac{X_2^TS_1X_2}{X^TS_2X}}\right|_{X_2=X} \nonumber\\
    &=-2S_2X\bracket{X^TS_2X}^{-1}\bracket{X^TS_1X}\bracket{X^TS_2X}^{-1}\nonumber\\
    & \ \ \ +2S_1X\bracket{X^TS_2X}^{-1} \tag{{\ref{app:B}{.3}}} &&
\end{flalign}
When \eqref{equ:app3} is set to be equal to zero, we have
\begin{flalign} \label{equ:app4}
    S_2^{-1}S_1X=X\bracket{X^TS_2X}^{-1}\bracket{X^TS_1X}. \tag{{\ref{app:B}{.4}}} &&
\end{flalign}

\bibliographystyle{IEEEtran}
\bibliography{IEEEtran_bibliography}
\end{document}